\documentclass[pra,twocolumn, showpacs, showkeys, secnumarabic, aps, amsmath, amssymb, nofootinbib, superscriptaddress, longbibliography, floatfix, table-of-contents, eqsecnum, dblfloatfix]{revtex4-2}

\usepackage[pdftex]{graphicx}
\usepackage{mathrsfs}
\usepackage[colorlinks, breaklinks, urlcolor={blue}, linkcolor={red}, citecolor={blue}]{hyperref}
\usepackage{array}
\usepackage{amsmath}
\usepackage{amsthm}
\usepackage{type1cm}
\usepackage{lettrine}
\usepackage[english]{babel}
\usepackage{lmodern}
\usepackage{microtype}
\usepackage{enumitem}
\usepackage{booktabs}
\usepackage{pst-node}
\usepackage{tikz-cd} 
\usepackage[T1]{fontenc}
\usepackage[boxed, vlined]{algorithm2e}
\usepackage{braket}
\usepackage{xcolor}

\newtheoremstyle{nopunct}
  {}{}
  {\itshape}
  {}
  {\bfseries}
  {}
  { }
  {\thmname{#1}\thmnumber{ #2}\thmnote{ (#3)}}

\theoremstyle{nopunct}
\newtheorem{theorem}{Theorem}

\newtheorem{definition}{Definition}

\begin{document}

\title{Influence of Noninertial Dynamics on Static Quantum Resource Theories}

\author{Saveetha Harikrishnan}
\email[]{saveehari@gmail.com}
\affiliation{Department of Physics, Sri Sai Ram Engineering College,Chennai, India}

\author{Tim Byrnes}
\email[]{tim.byrnes@nyu.edu}
\affiliation{New York University Shanghai, NYU- ECNU Institute of Physics at NYU Shanghai, 567 West Yangsi Road, 200124,
China} 
\affiliation{State Key Laboratory of Precision Spectroscopy, School of Physical and Material Sciences, East China Normal
University, Shanghai 200062, China}
\affiliation{Center for Quantum and Topological Systems, NYUAD Research Institute, New York University Abu Dhabi, UAE} 
\affiliation{Department of Physics, New York University, New York, NY 10003, USA}

\author{Chandrashekar Radhakrishnan}
\email[]{chandrashekar10@gmail.com}
\affiliation{New York University Shanghai, NYU- ECNU Institute of Physics at NYU Shanghai, 567 West Yangsi Road, 200124, China} 

\homepage[]{https://sites.google.com/view/chandrashekar}

\begin{abstract}
The effect of noninertial dynamics on static quantum resource theories is investigated. To this end, we first show the equivalence between 
noninertial effects and a completely positive, trace-preserving (CPTP) map.  In this formulation, the Unruh effect is equivalent to a bosonic 
amplifier channel. The effect of this map on a generic quantum resource is investigated by studying the role of the CPTP map on the three core 
ingredients of a resource theory, namely, the free states, the free operations and the resource quantifiers. We show several general statements can be made about these three components of a resource theory in the presence of noninertial motion.  
\end{abstract}

\keywords{CPTP map, Noninertial motion, Quantum Resource theory, Resource nongenerating}
\maketitle


\section{Introduction}

Quantum information theory has predominantly been developed under the assumption that quantum systems are either stationary or reside in inertial
frames of reference. However, in space-based applications \cite{yin2017satellite,lu2022micius} and otherwise, quantum systems may undergo 
accelerated or noninertial motion, giving rise to phenomena that 
cannot be captured within the inertial framework.  Several investigations
\cite{peres2004quantum,terno2006introduction,peres2003quantum,fuentes2010lecture,Nicolai2010inertial,alsing2003teleportation,grochowski2017effect,
friis2013relativistic,tjoa2022quantum,landulfo2016nonperturbative}
have been carried out to study the information theoretic aspects of quantum systems undergoing noninertial motion.  
Bipartite entanglement was the first type of quantum correlation to be investigated in a noninertial frame of reference 
\cite{fuentes2005alice}, where it was found that it vanishes in the infinite acceleration limit.  Meanwhile, tripartite entanglement 
though decreasing with acceleration was shown to be finite \cite{Hwang2001entanglement} even in the limit of infinite acceleration. 
Through the investigations on quantum discord \cite{datta2009quantum} it was also shown that the total quantum correlations also 
decrease but remain finite as the acceleration tends to infinity.  More recently, a comprehensive investigation of tripartite 
quantum coherence was carried out in Ref. \cite{harikrishnan2022accessible}, where it was found that  irrespective of its 
distribution coherence decreases with acceleration but remains finite under infinitely large acceleration. Numerous other studies investigating the effect of acceleration on properties such as 
entanglement 
\cite{fuentes2005alice,alsing2006entanglement,martin2011redistribution,bruschi2012particle,wang2011multipartite,Hwang2001entanglement,xiao2011mixed,
Adesso2007entanglement}, discord \cite{wang2010classical,brown2012vanishing,datta2009quantum,ramzan2014discord,sugumi2016discord,qiang2015geometric},
quantum steering \cite{Wang2016quantumsteering,wu2025gaussian}
and coherence \cite{wang2016irreversible,wu2020quantum,wu2021quantum,harikrishnan2022accessible} have been performed.  From these works 
it was found that generally quantum correlations decrease as a result of noninertial motion.  This can be understood  primarily due to the decohering influence of acceleration on the quantumness of the system.

Of the many quantum properties, nonlocality, steering, entanglement, discord and coherence form a subset that displays a hierarchical 
structure \cite{adesso2016measures,ma2019operational}  with respect to their occurrence in a quantum state.  
The hierarchical structure reflects the tendency of quantum properties to be 
lost in a specific order when a pure state is combined with a maximally mixed state.  A similar hierarchical loss of quantum properties is also 
observed in open quantum systems 
\cite{paulson2021hierarchy,chakrabarty2011study,bellomo2011dynamics,qasimi2011comparison,macieszczak2019coherence,ma2013multipartite,
cao2020fragility,radhakrishnan2019dynamics}.  
In open quantum systems, the hierarchical loss of quantum properties arises from the 
interaction between the system and the environment.  During this interaction, the system becomes entangled with the environmental degrees of freedom.  
Consequently tracing out the environment to obtain the reduced system state leads to a loss of quantum information.  This loss manifests as mixedness 
and results in decoherence leading to the decrease in quantum correlations in a hierarchical manner.

Quantum systems undergoing noninertial motion can be described using Rindler coordinates, where the quantum properties are distributed between two 
causally disconnected Rindler regions.  Since an observer in one region has no access to the degrees of freedom in the other, one of the two 
regions is traced out, resulting in a mixed state.  This mixedness leads to a reduction in the quantum resources of the system and thereby 
giving rise to a hierarchical decay of the resources analogous to the behavior  observed in open quantum systems. This loss of quantumness in 
noninertial motion can also be understood in terms of the Unruh effect \cite{Unruh1976}.  Here, an accelerated observer perceives the vacuum as a 
thermal bath of virtual particles.  The interaction of the quantum system with this virtual bath induces decoherence in the system, 
analogous to the effect of an environment in open quantum system.  In this framework, tracing over the inaccessible Rindler region is identical
to the tracing over the bath degrees of freedom and hence it similarly results 
in the mixedness of the system state.  We note that some previous studies have considered noninertial motion as a quantum channel.  However, its 
properties were not examined in depth \cite{aspachs2010optimal}, or alternative formulations were not in general trace preserving 
\cite{ahn2018unruh} (see Appendix \ref{app:ahn}).  For this reason, some further clarification of the approach seems to be necessary.

In this paper, we present the effects of noninertial motion on a generic quantum resource theory.  
We first formulate noninertial motion as a completely positive trace preserving (CPTP) map, which is a generic way to describe open quantum system 
dynamics \cite{schumacher1996sending}. The CPTP map that we construct can be employed for multilevel multipartite systems, generalizing the approach. 
The CPTP map provides an open system framework to understand noninertial dynamics.  This approach is firmly rooted 
in the Stinespring dilation theorem which is the underlying theoretical basis of an open quantum system. We then use these results to describe the 
influence of noninertial motion on a generic static quantum resource theory. We study the effects of the map on the three important components of a 
resource theory, namely the free states, the free operations and the resource quantifiers.

The manuscript is structured as follows:  Sec. \ref{fundamentals} gives a brief description of the relativistic dynamics of quantum systems for 
massless scalar fields.   A description of a CPTP map and its properties is given in Sec. \ref{dynamics}.  In Sec. \ref{relativisticdynamics} we 
derive a CPTP map for multipartite qudit system.  The impact of the Unruh effect on a generic quantum resource theory is analyzed in Sec. 
\ref{resource} using an open quantum 
approach utilizing the Stinespring dilation theorem.  We then investigate the free states, free operations and resource quantifiers. 
The effect of noninertial motion on free states is 
given in Sec. \ref{freestates}.  Section \ref{freeoperations} investigates the influence of noninertial motion on free operations.  The 
validity of resource quantifiers under noninertial motion is discussed in Sec. \ref{resourcequantifier}.
Finally we present our conclusions in 
Sec. \ref{conclusions}.

\section{Relativistic dynamics of bosonic modes}
\label{fundamentals}

Minkowski coordinates --- a combination of three space and one time dimension ---  are used to describe events for an inertial observer. To describe events in non-inertial frames we must use  Rindler coordinates. These two coordinate systems can be related via a Rindler transformation. In the present study, we consider space-time with one spatial and one time dimension ($z,t$). Minkowski space corresponding 
to this situation is divided into four wedges as shown in Fig. \ref{fig1}.
The regions $F$ and $P$ are the future and past light cones respectively, and the equations $|z|=t$ and $|z|=-t$ describe the future and past event horizons. The two causally disconnected Rindler regions are denoted by I and II in Fig. \ref{fig1}. In this $(z,t)$ space, the world line of a uniformly accelerated observer corresponds to a hyperbola. Each branch of the hyperbola corresponds to one Rindler region. The coordinates of the two Rindler regions are 
\begin{align}
  t &=  a^{-1} e^{a \xi} \sinh a \tau, \nonumber \\
  z& =a^{-1} e^{a \xi} \cosh a \tau 
\end{align}
for $|z|<t$ and 
\begin{align}
  t &= -a^{-1} e^{a \xi} \sinh a \tau, \nonumber \\
  z& =a^{-1} e^{a \xi} \cosh a \tau 
  \end{align}
for $ |z|>t $.  Here, $a$ is the acceleration, $\tau$ is the proper time, and $\xi$ is the space-like coordinate.

\begin{figure}
\includegraphics[width=0.69\columnwidth]{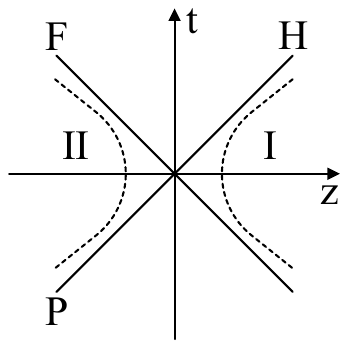} 
  \caption{Minkowski space represented by the $(z,t)$ plane is divided into four regions of Rindler coordinates. In this figure, $H$ denotes the horizon. The future $F$ and past $P$ event horizons are represented by solid lines and the accelerated observers path are represented by the dashed lines.}
\label{fig1}
\end{figure}

Consider a free massless scalar field in Minkowski space-time. Let Alice and Bob each have a monochromatic detector with frequencies 
$\omega_1$ and $\omega_2$, respectively. In this framework, a Bell state reads
\begin{equation}
\frac{1}{\sqrt{2}}\bigg({|0_{\omega_{1}} \rangle^{\mathcal{M}} |0_{\omega_{2}} \rangle^{\mathcal{M}} + |1_{\omega_{1}} \rangle^{\mathcal{M}} |1_{\omega_{2}} \rangle^{\mathcal{M}}}\bigg),
\end{equation}
where $|0_{\omega_i}\rangle^\mathcal{M}$ and $|1_{\omega_i}\rangle^\mathcal{M}$ are the vacuum state and first excited state with frequency $\omega_i$ in Minkowski space. In an inertial frame, the vacuum states are the respective ground states. The excited states can be obtained from these ground states by applying the corresponding Minkowski creation operators as
\begin{align}
|n_{\omega_{i}} \rangle^{\mathcal{M}} = \frac{(\hat{a}_{\omega_{i}}^{\dag})^{n}}{\sqrt{n !}} |0_{\omega_{i}} \rangle^{\mathcal{M}}.
\end{align}
Let us consider a situation where one of the particles, say Bob's, moves with a uniform acceleration. The quantum state of Bob can be specified using the Rindler or Unruh coordinates since the wavefunction is highly delocalized in space. The initial Minkowski space is divided into two causally disconnected regions in the Rindler and Unruh basis. The Unruh modes are sharply peaked at the Rindler frequency $\Omega=\frac{|
\omega|c}{a}$, which is a dimensionless quantity. Here $\omega$ and $c$ are the wave vector and velocity of light and $a$ is the acceleration of Bob. The Minkowski and the Rindler modes can be related through the Unruh modes as follows:
\begin{align}
\phi &= \int_{0}^{\infty} (a_{{\omega},\mathcal{M}} 
          u_{{\omega},\mathcal{M}} + a^{\dagger}_{{\omega},\mathcal{M}} u^*_{{\omega},\mathcal{M}}) \; d\omega \nonumber \\
     &= \int_{0}^{\infty} \bigg( A_{{\Omega},R} 
          u_{{\Omega},{R}} + A^{\dagger}_{{\Omega},R} u^*_{{\Omega},R} + A_{{\Omega},L} u_{{\Omega},{L}}  \nonumber  \\
     &  \phantom{aaaaaa}  + A^{\dagger}_{{\Omega},L} u^*_{{\Omega},L}             \bigg) \; d\Omega \nonumber \\
     &= \int_{0}^{\infty} \bigg( b_{{\Omega},\text{I}} 
          u_{{\Omega},{\text{I}}} + b^{\dagger}_{{\Omega},\text{I}} u^*_{{\Omega},\text{I}} + b_{{\Omega},\text{II}} u_{{\Omega},\text{II}} \nonumber  \\
     &   \phantom{aaaaaa}  + b^{\dagger}_{{\Omega},\text{II}} u^*_{{\Omega},\text{II}} \bigg) \; d\Omega .
\end{align}
Here, $a_{\omega,\mathcal{M}}$ is the Minkowski annihilation operator, $A_{\Omega,R}  $ and $ A_{\Omega,L}$ are the Unruh annihilation operators for the right ($R$)  and left ($L$) regions, and $b_{\Omega,\text{I}}$ and $b_{\Omega, \text{II}}$ are the Rindler annihilation operators in regions I and II. These operators obey bosonic commutation relations:
\begin{align}
    [a_{\omega_1,\mathcal{M}}, a^\dagger_{\omega_2,\mathcal{M}}]
       &= \delta_{\omega_1\omega_2} , \nonumber \\
    [A_{\Omega_1,R},A^\dagger_{\Omega_2,R}]
       &= [A_{\Omega_1,L},A^\dagger_{\Omega_2,L}]
       =\delta_{\Omega_1\Omega_2} , \nonumber \\
   [b_{\Omega_1,\text{I}},b^\dagger_{\Omega_2,\text{I}}]
       &= [b_{\Omega_1,\text{II}},b^\dagger_{\Omega_2,\text{II}}]
       =\delta_{\Omega_1\Omega_2}  .
\end{align}
The commutation relations between operators in the $R$ and $L$ regions corresponding to the Unruh modes and the operators in region I and II of Rindler 
coordinates vanish. Since the creation and annihilation operators of the Minkowski and Unruh bases do not mix and we have $\ket{0}_\mu = \ket{0}_U = \prod_\Omega\ket{0_\Omega}_U$. 
Contrarily, for the Rindler modes, the creation and annihilation operators mix and so we have
\begin{align}
 \ket{0_{\Omega}}_U = \sum_N \frac{\tanh^n r_{\Omega}}{\cosh r_{\Omega}} \ket{n_{\Omega}}_{\text{I}} \ket{n_{\Omega}}_{\text{II}},
 \end{align}
where $\ket{n_\Omega}_{\text{I}}$ is the $n$th excited state of the Rindler I vacuum state.
For our study, we consider a wave packet narrowly peaked in $\Omega$. Consequently the Unruh and Rindler commutators are 
$[A_{\Omega,R},A^\dagger_{\Omega,R}]
       = [A_{\Omega,L},A^\dagger_{\Omega,L}]=1$ and 
$[b_{\Omega,{\text{I}}},b^\dagger_{\Omega,{\text{I}}}]
       = [b_{\Omega,{\text{II}}},b^\dagger_{\Omega,{\text{II}}}]=1$.
Given these idealized circumstances, the most general creation operator corresponding to a purely positive Minkowski frequency is 
\begin{align}
a^\dagger_{\Omega, U} = q_LA^{\dagger}_{\Omega,L} + q_RA^{\dagger}_{\Omega,R},
\end{align}
where $q_R$ and $q_L$ are complex numbers with $|q_R|^2 + |q_L|^2 = 1$. Considering $q_R=1$ and $q_L=0$, we have
\begin{align}
 a^{\dagger}_{\Omega,U}\ket{0_{\Omega}}_U
     = \sum_{n=0}^{\infty} \frac{\tanh^n r_{\Omega}}{\cosh r_\Omega} \bigg(\frac{\sqrt{n+1}}{\cosh r_{\Omega}} \bigg) \ket{\Phi^n_\Omega}~, 
\end{align}   
where 
\begin{align}
 \ket{\Phi^n_\Omega} = q_L\ket{n_\Omega}_{\text{I}}\ket{(n+1)_\Omega}_{\text{II}}
+ q_R\ket{(n+1)_\Omega}_{\text{I}}\ket{n_\Omega}_{\text{II}}~.
\end{align}
Using the monochromatic wave approximation, the Minkowski and Rindler modes are related as 
\begin{align}
 \hat{a}^{\dagger}_{\omega} &= \hat{b}^{\dagger}_{\Omega_{\text{I}}}
                                 \cosh r- \hat{b}_{\Omega_{\text{II}}}\sinh r =\hat{S}_{\Omega} \hat{b}^{\dagger}_{\Omega_{I}}
                                \hat{S}^\dagger_{\Omega}~,\nonumber \\
 \hat{a}_{\omega}          &= \hat{b}_{\Omega_{\text{I}}}\cosh r -
                                \hat{b}^\dagger_{\Omega_{\text{II}}}\sinh r 
                              = \hat{S}_{\Omega} \hat{b}_{\Omega_{\text{I}}}
                                \hat{S}^\dagger_{\Omega}~,
\end{align}
where 
\begin{align}
    \hat{S}_\Omega(r)=\exp[r(b^\dagger_{\Omega \text{I} }b^\dagger_{\text{II}} -b_{\Omega I}b_{\text{II}})] ,
    \label{Stransform}
\end{align}
performs the transformation from Minkowski coordinates 
to Rindler coordinates. Consequently, the single mode Minkowski vacuum corresponding to a non inertial observer becomes
\begin{align}
 \ket{0_{\omega}}^{\mathcal M} &= \hat{S}_{\Omega}(r)               (\ket{0}_{\text{I}} \otimes \ket{0}_{\text{II}}),              \nonumber \\
                              &= \frac{1}{\cosh r} \sum_{n=0}^\infty \tanh^n{r}\ket{n_\Omega}_{\text{I}} \ket{n_\Omega}_{\text{II}} ,
\end{align}
where $\cosh{r} = (1-e^{-2\pi\Omega})^{-1/2}$ with $\ket{n_\Omega}_{\text{I}} $ and $\ket{n_\Omega}_{\text{II}} $ being the mode decomposition in Rindler regions I and II respectively. The first excited state in Minkowski coordinates is 
\begin{align}
 \ket{1_\omega}^{\mathcal{M}} &= \hat{a}^\dagger_{\omega} \ket{0_{\omega}}^{\mathcal M},  \nonumber \\
                              &= \hat{S}_{\Omega}(r)  \hat{b}^\dagger_{\Omega \text{I}} (\ket{0}_{\text{I}} \otimes \ket{0}_{\text{II}}),              \nonumber \\
                              &= \frac{1}{\cosh^2 r} \sum_{n=0}^\infty \sqrt{n+1}\tanh^n{r}\ket{(n+1)_\Omega}_{\text{I}} \ket{n_\Omega}_{\text{II}}.
\end{align}
We observe that, due to the squeezing behavior of a non inertial observer, a single Minkowski mode can be written as a superposition of two Rindler modes. Consequently, there exists coherence between the two Rindler modes. Since these two modes are not causally connected, this coherence cannot be experimentally observed.

\section{Noisy quantum channels}
\label{dynamics}

In this section we briefly summarize the formalism associated with noisy quantum channels, and the properties that must be followed to be called a CPTP map. 
    
Consider a quantum system $Q$ with an initial state $\rho$ in contact with an external environment.  In the formalism that we develop in this paper, all operators act with respect to the system $ Q $, hence we will only explicitly add the label $ Q $ when there is risk of confusion.  A dynamical evolution of the state $\rho$ is described by the map $\mathcal{E}$ such that the evolution is 
\begin{align}
   \rho \rightarrow \rho' = \mathcal{E}(\rho)~.
\end{align}
The map $\mathcal{E}$ characterizing the evolution should satisfy the following properties in order that it is a CPTP map.  
\begin{enumerate}
  \item[(i)] {\it  Linearity:} The map $\mathcal{E}$ must act linearly on the density matrices, meaning that if \\
                        $\rho = p_1\rho_1 + p_2 \rho_2$ then
\begin{align}
   \mathcal{E}(\rho) 
                         &= p_1 \mathcal{E}(\rho_1) + p_2 
                         \mathcal{E} (\rho_2).
\end{align}
This property implies that the channel preserves a probabilistic combination of input states, ensuring a correct evolution of mixed states.
\item[(ii)] {\it Trace Preserving:} 
  For a map $\mathcal{E}$ to be trace preserving, it must satisfy
  ${\rm Tr} (\rho) = {\rm Tr}(\mathcal{E} (\rho) ) $.
  The trace-preserving nature ensures that the quantum state stays normalized under evolution.
\item[(iii)] {\it Positive:}
  Since the density matrix $\rho$ is positive, the evolved state $\rho'=\mathcal{E}(\rho)$ must also be positive. Here, we use the word positive to refer to positive semidefinite matrices which are Hermitian and do not have negative eigenvalues. The positivity criterion assures that the output is also a valid quantum state.
  Along with satisfying the conditions (i), (ii) and (iii)
  described above, the superoperator $\mathcal{E}$ should also satisfy the criterion of complete positivity described below. 
\item[(iv)] {\it Completely positive:} 
The superoperator $\mathcal{E}(\rho)$ should be completely positive.
Complete positivity entails the following properties. Let us assume that the superoperator is extended to apply over a compound system $PQ$ where $\mathcal{E}^{PQ} = I^P \otimes \mathcal{E}^Q$, where $I^P$ is the identity superoperator acting on $P$ and $\mathcal{E}^Q $ is the superoperator acting on $Q$. The superoperator $\mathcal{E}^Q(\rho)$ is completely positive if the superoperator of the form $I^P \otimes \mathcal{E}^Q$ is also positive. A completely positive map ensures that when the size of the quantum system increases the superoperator is extensible. This property implies that quantum operations remain valid even when applied to a subsystem in an entangled state. 
If the map $\mathcal{E}^Q$ is given by 
\begin{align}
\mathcal{E}^Q(\rho)=\sum_k A_k^Q  \rho   {A_k^Q}^{\dagger}
\label{cptpmap}
\end{align}
and $A_k^Q$ is an operator on the Hilbert space of $ Q $ then the map is a completely positive map.
\end{enumerate}

\section{Noninertial dynamics as a CPTP map}
\label{relativisticdynamics}

\subsection{Kraus operator}

We propose a CPTP map to describe the Unruh effect in a $N$-partite system of which $M$ parties 
exhibit non-inertial motion.  This allows for a compact representation of how a quantum system is altered due to the presence of subsystems in non-inertial 
systems.  

\begin{theorem} \label{theorem1}
For a $N$-partite system in which $M$ subsystems are under non-inertial motion, the Unruh effect can be represented by 
a CPTP map given by
\begin{align}
{\cal E} (\rho') = \sum_k A_k  \rho'  A_k^{\dagger}
\label{maptheorem1}
\end{align}
and $ \rho'$ is the state where all $ N $ parties are in inertial motion. Here, the Kraus operator $A_{k}$ is given by 
\begin{equation}
    A_{k} = {\mathcal{I}} \otimes {\mathcal{A}}_{k}
    \label{Maprelativistic}
\end{equation}
where $\mathcal{I}$ is the identity operator which acts on the sector of $N-M$ inertial qudits and ${\mathcal{A}}_{k}$ is the operator acting on the 
noninertial qudits.  Here, $\mathcal{I}$ and ${\mathcal{A}}_{k}$ are given by
\begin{equation}
    {\mathcal{I}} = \bigotimes_{m=1}^{N-M} I_{m}; \qquad \qquad
    {\mathcal{A}}_{k} = \bigotimes_{m=1}^{M} {\mathsf{A}}_{k_{m}}.
    \label{quantitiesIandA}
\end{equation}
The Kraus operator label $ k $ labels each combination of $ (k_1,k_2, \dots, k_M) $, Here ${\mathsf{A}}_{k_{m}}$ defines the evolution of one qudit under non-inertial motion and is given by 
\begin{equation}
  {\mathsf{A}}_{k_m} =  \frac{(b^{\dagger}_{m})^{k_m}} { \sqrt{k_m !}} 
                      \frac{\tanh^{k_m} r_{m}}{\cosh^{\Hat{\ell}_{m} +1 } r_{m}}  ,
\label{RelativisticmapA}                      
\end{equation}
where $\Hat{\ell}_m = b^\dagger_m b_m$ is the number operator acting on the accelerating system's Hilbert space. 
\end{theorem}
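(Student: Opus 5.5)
The plan is to reduce everything to the single-qudit case via the Stinespring dilation picture implicit in Sec.~\ref{fundamentals}, and then tensor up. For one accelerating mode, I model the Minkowski-to-Rindler transformation as the unitary $\hat S_\Omega(r)$ of Eq.~(\ref{Stransform}) acting on the system mode (Rindler I, identified with the Minkowski mode $b$) together with an auxiliary mode (Rindler II) initialised in the vacuum $\ket{0}_{\text{II}}$. Since region II is causally inaccessible, the physical state is obtained by tracing it out:
\begin{align}
\mathcal{E}_m(\rho) = {\rm Tr}_{\text{II}}\left[\hat S(r)\,(\rho\otimes\ket{0}\bra{0}_{\text{II}})\,\hat S^\dagger(r)\right] = \sum_{k}\mathsf{A}_{k}\rho\,\mathsf{A}_{k}^\dagger ,
\end{align}
with $\mathsf{A}_k = {}_{\text{II}}\bra{k}\hat S(r)\ket{0}_{\text{II}}$. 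This is already of the Kraus form (\ref{cptpmap}), so complete positivity is automatic. Trace preservation follows from unitarity: $\sum_k \mathsf{A}_k^\dagger \mathsf{A}_k = {}_{\text{II}}\bra{0}\hat S^\dagger \hat S\ket{0}_{\text{II}} = I$.

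The computational core is to evaluate ${}_{\text{II}}\bra{k}\hat S(r)\ket{0}_{\text{II}}$ explicitly. I would use the standard disentangled (normal-ordered) form of the two-mode squeezer,
\begin{align}
\hat S(r) = e^{\tanh r\, b^\dagger_{\text{I}} b^\dagger_{\text{II}}}\,(\cosh r)^{-(\hat n_{\text{I}}+\hat n_{\text{II}}+1)}\,e^{-\tanh r\, b_{\text{I}} b_{\text{II}}} ,
\end{align}
obtained from the $su(1,1)$ BCH identity. Acting on $\ket{0}_{\text{II}}$, the rightmost exponential gives the identity. The middle factor becomes $(\cosh r)^{-(\hat\ell+1)}$, and projecting the left exponential onto $\bra{k}_{\text{II}}$ picks out $(\tanh r)^k (b^\dagger)^k/\sqrt{k!}$. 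This yields exactly Eq.~(\ref{RelativisticmapA}). As a consistency check, I would apply the result to $\ket{0}$ and $\ket{1}$ and verify that it reproduces the Rindler expansions of $\ket{0_\omega}^{\mathcal M}$ and $\ket{1_\omega}^{\mathcal M}$ given earlier, including the $\sqrt{n+1}/\cosh^2 r$ coefficients. I could also verify $\sum_k \mathsf{A}_k^\dagger\mathsf{A}_k=I$ directly on a number state $\ket{\ell}$: the sum reduces to $\sum_k\binom{\ell+k}{k}\tanh^{2k}r/\cosh^{2\ell+2}r = 1$ by the negative binomial series.

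For the $N$-partite statement, each accelerating party $m$ has its own independent Rindler-II environment and squeezer $\hat S_{\Omega_m}(r_m)$, while the inertial parties carry no environment. The global dilation unitary is therefore $I^{\otimes(N-M)}\otimes\bigotimes_m \hat S_m$ with the product vacuum environment. Tracing out all environments factorises, and the resulting Kraus operators are $\mathcal I\otimes\bigotimes_m\mathsf A_{k_m}$, labelled by the tuples $(k_1,\dots,k_M)$, which is Eq.~(\ref{Maprelativistic}). Trace preservation and complete positivity follow because tensor products of CPTP maps are CPTP: $\sum_k A_k^\dagger A_k=\mathcal I\otimes\bigotimes_m\sum_{k_m}\mathsf A_{k_m}^\dagger\mathsf A_{k_m}=I$.

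The main obstacle is the disentangling identity for $\hat S(r)$ and getting the operator-valued power $\cosh^{-(\hat\ell+1)}r$ right. I would first attempt this via the $2\times2$ $SU(1,1)$ matrix representation. If that becomes messy, the fallback is to compute the matrix elements $\bra{n}_{\text{I}}\bra{k}_{\text{II}}\hat S\ket{\ell}_{\text{I}}\ket{0}_{\text{II}}$ directly from $\hat S b^\dagger_{\text{I}}\hat S^\dagger = \cosh r\, b^\dagger_{\text{I}}-\sinh r\, b_{\text{II}}$, starting from the vacuum expansion already displayed in the paper and inducting on $\ell$.

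A secondary subtlety is the qudit truncation. The operator $b^\dagger$ maps outside any finite-dimensional subspace, so the map is properly a bosonic channel on the full Fock space. I would state the result there and treat qudits as the restriction of the input space, noting that the output is not confined to $d$ levels.
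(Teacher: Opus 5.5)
Your argument is correct, but it runs in the opposite direction to the paper's. The paper works at the level of states. It writes down the Rindler expansion (\ref{excitedstate}) of each Fock state, which it asserts without deriving, and assembles the accelerated $N$-partite state. It then traces out region II to obtain (\ref{densitymatrixaccelerated}) and checks that $\mathcal{I}\otimes\bigotimes_m\mathsf{A}_{k_m}$ reproduces the same coefficients. Linearity, trace preservation, positivity and complete positivity are verified separately afterwards. You instead start from the dilation, define $\mathsf{A}_k={}_{\text{II}}\bra{k}\hat S(r)\ket{0}_{\text{II}}$, and extract it from the normal-ordered $su(1,1)$ factorization. You have stated that factorization correctly: it puts $(b^\dagger)^k$ to the left of $(\cosh r)^{-(\hat\ell+1)}$, as required, and gives $\mathsf{A}_k\ket{\ell}\propto\ket{\ell+k}$.

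Your route buys several things:
\begin{enumerate}
\item The Kraus operators are derived rather than guessed and matched.
\item Complete positivity is automatic.
\item Trace preservation becomes the operator identity $\sum_k\mathsf{A}_k^\dagger\mathsf{A}_k=I$, inherited from unitarity of $\hat S$.
\item The expansion (\ref{excitedstate}) comes out as a corollary.
\item The Stinespring form ${\rm Tr}_{\text{II}}[\mathcal{V}\rho\mathcal{V}^\dagger]$, which the paper simply assumes in Theorem~\ref{theorem2}, is actually established.
\end{enumerate}
Your remark that the output leaves the $d$-level input space is also a fair sharpening of the statement. The paper's route needs nothing beyond the explicit state expansion and the negative binomial series, and it stays close to the state-level calculations standard in relativistic quantum information. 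Your fallback of inducting on $\ell$ from the vacuum expansion is essentially that route.
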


\begin{proof} 
When $M$ of the $N$  parties are under non-inertial motion, the quantum states 
corresponding to these parties are described by Rindler coordinates in Minkowski space-time.  For any given quantum party, the $\ell $th excited state for a $d$-dimensional qudit $ \ell \in [0, d-1] $ is
\begin{eqnarray}
|\ell_{\omega} \rangle^{\cal M} &=& \hat{S}_{\Omega} \frac{( b_{\text{I}}^{\dagger} )^{\ell} }{\sqrt{\ell !}} ( |0 \rangle_{\text{I}} \otimes |0 \rangle_{\text{II}} ) \nonumber \\
                          &=& \frac{1}{\sqrt{\ell!}} \; \frac{1}{\cosh^{\ell+1} r} \sum_{n=0}^{\infty} \tanh^{n} r \sqrt{(n+1) \cdots (n+\ell)} \nonumber \\
                          & & |n+\ell \rangle_{\text{I}} \otimes |n \rangle_{\text{II}} 
    \label{excitedstate} 
\end{eqnarray}
where $\hat{S}_{\Omega}$ is the coordinate transformation described in Eq. (\ref{Stransform}).  Let us consider a general $N$-partite quantum state
\begin{equation}
    |\psi \rangle = \sum_{\ell_1 \dots \ell_N=0}^{d-1} \alpha_{\ell_1 \dots \ell_N} 
                   | \ell_1 \dots \ell_N \rangle^{\cal M}
\label{quantumstate}                   
\end{equation}
where $\alpha_{\ell_1 \dots \ell_N} $ is the coefficient of superposition and $\sum_{\ell_1 \dots \ell_N} |\alpha_{\ell_1 \dots \ell_N}|^{2} = 1$. Here we have changed notation where the frequency label $ |\ell_{\omega} \rangle  $ has been converted to a mode label $ |\ell_{m} \rangle $.  Now when $M$ parties undergo acceleration, the quantum state becomes 
\begin{align}
&  |\psi \rangle = \sum_{\ell_1 \dots \ell_N=0}^{d-1}  \alpha_{\ell_1 \dots \ell_N} \Bigg[  \bigotimes_{m'=1}^{N-M} | \ell_{m'} \rangle \Bigg] \nonumber \\
  &   \bigotimes_{m=1}^{M} \sum_{n_{m} = 0}^{\infty}
  \frac{ \tanh^{n_{m}} r_{m}}{\cosh^{\ell_{m} + 1} r_{m}}
\sqrt{\frac{(n_{m}+\ell_{m})!}{\ell_{m}! n_{m} !}}
                    |n_{m} +\ell_{m} \rangle_{\text{I}} \otimes | n_{m}  \rangle_{\text{II}}  ,
\label{acceleratedstate}                    
\end{align}
using (\ref{excitedstate}). From the expression of the accelerated state in Eq. (\ref{acceleratedstate}) we can construct the corresponding density matrix.  
The two Rindler modes in the density matrix are causally disconnected, hence we can trace out the second Rindler mode. 
The density matrix after tracing out the second Rindler mode is
\begin{align}
 & \rho = \sum_{\ell_1 \dots \ell_N=0}^{d-1} \sum_{\ell_1' \dots \ell_N' =0}^{d-1} \alpha_{\ell_1 \dots \ell_N}  \alpha_{\ell_1' \dots \ell_N'}^* \sum_{n_{m} = 0}^{\infty} \Bigg[  \bigotimes_{m'=1}^{N-M} |\ell_{m'}  \rangle  \langle \ell_{m'}' |  \Bigg]  \nonumber \\
& \Bigg[    \bigotimes_{m=1}^{M} 
 \frac{\tanh^{2n_{m}} r_{s}}{  \cosh^{\ell_{s} + \ell_{m}' + 2} r_{s}}
\sqrt{\frac{(n_{m}+\ell_{s})! (n_{m} +\ell_{m}')!  }{ \ell_{s}! \ell_{m}'!  (n_{m} !)^2 }}         \nonumber \\
& |n_{m} +\ell_{s} \rangle_{\text{I}}  \langle n_{m} +\ell_{m}' |_{\text{I}} \Bigg]  .
\label{densitymatrixaccelerated}       
\end{align}
Using the definition (\ref{RelativisticmapA}) we evalaute
\begin{align}
{\mathsf{A}}_{k_m}  | \ell_m \rangle =   \frac{ \tanh^{k_m} r_{m}}{\cosh^{\ell_m +1} r_{m}}  \sqrt{\frac{(\ell_m+ k_m)!}{k_m ! \ell_m!}} 
| \ell_m \rangle .
\end{align}
Then evaluating
\begin{align}
{\cal E} (| \psi \rangle \langle \psi | )  & =  \sum_k A_k | \psi \rangle \langle \psi | A_k^\dagger \nonumber \\
& = 
\sum_{k_1 \dots k_M=0}^\infty  {\cal I} \otimes (\bigotimes_{m=1}^M {\mathsf{A}}_{k_m} )  | \psi \rangle \langle \psi |  {\cal I} \otimes (\bigotimes_{m=1}^M {\mathsf{A}}_{k_m}^\dagger ) 
\end{align}
we obtain the same expression as (\ref{densitymatrixaccelerated}) with the replacement of dummy indices $ n_m \rightarrow k_m $.

For mixed initial state, the same logic follows as an arbitrary density matrix can be diagonalized to a probabilistic mixture of pure states.  The result for (\ref{densitymatrixaccelerated}) holds for each of the pure components.  Likewise, due to the linearity of the CPTP map, each of the pure state components mix according to their probabilities.   This proves the claim of Theorem \ref{theorem1}.  
\end{proof}

We note that the Kraus operators (\ref{RelativisticmapA}) have the same form as the bosonic amplifier channel 
\cite{ivan2011operator,qi2017capacities}.  For comparison, we write the operator in terms of its Fock state expansion
\begin{align}
{\mathsf{A}}_{k} = \frac{1}{\cosh r} \sum_{n=0}^\infty \sqrt{\binom{n+k}{k}}
\frac{\tanh^k r }{\cosh^n r} |n+k \rangle \langle n | .
\end{align}
The Kraus operator $ {\mathsf{A}}_{k}  $ corresponds to adding $ k $ bosons to the system.  

We note that the Kraus operators (\ref{RelativisticmapA}) and the CPTP map that we have derived do not agree with the results of Ref. 
\cite{ahn2018unruh} in general.  
Our results coincide for particular states such as $( | 01\rangle + |10 \rangle )/\sqrt{2} $. However as shown in Appendix
\ref{app:ahn}, for other states such as $( | 00\rangle + |11 \rangle )/\sqrt{2} $, the Kraus operator in Ref. \cite{ahn2018unruh}  is no 
longer a trace preserving map.  In contrast, our operator is trace preserving for general states as shown through the proofs in the 
next subsection.

\subsection{Properties of the Map}

Next we verify the four fundamental properties of a CPTP map as given in Sec. \ref{dynamics} for our map (\ref{maptheorem1}).

\subsubsection{Linearity}
\label{sec:prop1}

Let $\rho$ be a density matrix that is a probabilistic mixture of two density 
matrices $\rho_{1}$ and $\rho_{2}$, i.e. $\rho = p_{1} \rho_{1} + p_{2} \rho_{2}$, where $p_{1}$ and the $p_{2}$ are the mixing probabilities. 
The action of the map $\mathcal{E}$ on the density matrix is 
\begin{eqnarray}
\mathcal{E}(\rho) &=& \sum_{k} A_{k} \rho A^{\dagger}_{k} = \sum_{k} A_{k} (p_{1} \rho_{1} + p_{2} \rho_{2}) A^{\dagger}_{k}, \nonumber \\
                      &=&  \sum_{k} ( p_{1} A_{k} \rho_{1} A^{\dagger}_{k} + p_{2} A_{k} \rho_{2}  A^{\dagger}_{k} ), \nonumber \\
                      &=&  p_{1} \mathcal{E} (\rho_{1})  + p_{2} \mathcal{E} (\rho_{2}).
\end{eqnarray}
This proves that the map $\mathcal{E}(\rho)$ is linear.

\subsubsection{Trace preserving}
\label{sec:prop2}

To prove that the map ${\mathcal{E}}$ is trace preserving, first consider the eigenstate decomposition of an arbitrary density matrix $ \rho = \sum_j p_j | \psi_j \rangle \langle \psi_j | $, where $ \sum_j p_j = 1 $, and  $ | \psi_j \rangle $ are normalized states. The state before the map satisfies $ \text{Tr} (\rho) = 1 $.  After applying the map we have
\begin{align}
\text{Tr} ( {\mathcal{E}} ( \rho )) & =  \sum_j p_j \sum_{k} \text{Tr} (  A_{k} | \psi_j \rangle \langle \psi_j | A^{\dagger}_{k}  ) \nonumber \\
& =  \sum_j p_j \sum_{k} \langle \psi_j | A^{\dagger}_{k} A_{k} | \psi_j \rangle 
\label{traceeps}
\end{align}
For each term within the $ j $ sum, we have
\begin{align}
  &  \sum_{k} \langle \psi_j | A^{\dagger}_{k} A_{k} | \psi_j \rangle 
=  \sum_{k_1 \dots k_M=0}^\infty \langle \psi_j | ( {\cal I} \otimes  \bigotimes_{m=1}^M   {\mathsf{A}}_{k_m}^\dagger  {\mathsf{A}}_{k_m}) | \psi_j \rangle \label{tracepreservingzeroth} \\
 &= \sum_{\ell_1 \dots \ell_N=0}^{d-1} |\alpha_{\ell_1 \dots \ell_N}^{(j)} |^2 
 \prod_{m=1}^{M} \sum_{k_m = 0}^{\infty}  \frac{\tanh^{2 k_m} r_{m}}{{\cosh^{2(\ell_{m}+1)} r_{m}}}  \frac{ (k_m + \ell_{m})! }{ \ell_{m}! k_m!}                \label{tracepreservingfirst} \\  
 &= \sum_{\ell_1 \dots \ell_N=0}^{d-1} |\alpha_{\ell_1 \dots \ell_N}^{(j)} |^2  \prod_{m=1}^{M} 
    \frac{\cosh^{2(\ell_{m} + 1)} r_{m} }{\cosh^{2(\ell_{m} + 1)} r_{m}}, \nonumber \\
    &= \sum_{\ell_1 \dots \ell_N=0}^{d-1} |\alpha_{\ell_1 \dots \ell_N}^{(j)} |^2  = 1 ,
\end{align}
where in Eq. (\ref{tracepreservingzeroth}) we used the fact that the operator $ {\mathsf{A}}_{k_m}^\dagger  {\mathsf{A}}_{k_m} $ is diagonal in Fock space and the binomial theorem in Eq.  (\ref{tracepreservingfirst}).  Substituting the result into (\ref{traceeps}), we see that $ \text{Tr} ({\cal E} (\rho)) = 1 $, which shows that $ {\cal E} $ is trace preserving.

\subsubsection{Positive map}
\label{sec:prop3}

A given operator $O$ is positive if $\langle \chi | O | \chi \rangle $ is positive for an arbitrary state  $ | \chi \rangle $.  In our case we consider 
\begin{align}
\langle \chi | {\mathcal{E}} (\rho) | \chi \rangle & = 
\sum_{k} \langle \chi | A^{\dagger}_{k} \rho A_{k} | \chi \rangle \nonumber \\
& = \sum_{k} \langle \chi_k | \rho | \chi_k \rangle \ge 0 ,
\end{align}
where $ | \chi_k \rangle =   A_{k} | \chi \rangle $.  This is a non-negative quantity because $ \rho $ is a positive matrix
and hence $  \langle \chi_k | \rho | \chi_k \rangle \ge 0 $ for an arbitrary $ | \chi_k \rangle $.

\subsubsection{Completely Positive}
\label{sec:prop4}

Consider the evolution superoperator corresponding to the non-inertial dynamics for system $ Q $ as given in (\ref{cptpmap}). If we extend this superoperator to a composite system $PQ$ such that 
$\mathcal{E}^{PQ} = \mathcal{I}^{P} \otimes \mathcal{E}^{Q}$ we can write 
\begin{equation}
    \mathcal{E}^{PQ}(\rho) = \sum_{k} A^{PQ}_{k} \rho^{PQ} A^{PQ \dagger}_{k}.
\end{equation}
In the relativistic quantum channel identity operations imply the inertial motion of the qudits.  If the number of particles
in the inertial frame is increased by $T$ we have
\begin{eqnarray}
    A_{k}^{PQ} &=& \bigotimes_{m =1}^{T} {\mathcal{I}}_{m} \otimes  \bigotimes_{m' =1}^{N-M} {\mathcal{I}}_{m'} \otimes  {\mathcal{A}}_{k}^{Q}, \nonumber \\
               &=& \bigotimes_{m=1}^{N-M+T}  {\mathcal{I}}_{m} \otimes   {\mathcal{A}}_{k}^{Q}.
\end{eqnarray}
Since $A_{k}^{PQ}$ has the same form as $A_{k}^{Q}$ with only a trivial extension of the inertial qudits, the operator $\mathcal{E}^{Q}$ is completely positive.

\subsubsection{Summary}
The results of  Secs. \ref{sec:prop1}-\ref{sec:prop4} show that the the four properties to be satisfied by a CPTP map hold for Eq. (\ref{maptheorem1}).  We therefore find that the noninertial motion of a  $N$-partite multilevel quantum state can be expressed as a CPTP map as described in Theorem \ref{theorem1}.

\section{Effect on quantum resource theories}
\label{resource}
Now that we have established that non-inertial motion corresponds to a CPTP map, we may examine its consequences. In this section we will explore its effects on quantum resource theories.  

\subsection{Quantum resource theories}

Quantum systems exhibit a hierarchy of quantum correlations \cite{radhakrishnan2019basis,ma2019operational} ranging from nonlocality which 
is observed in spatially separated systems to quantum coherence which is found even in a single qubit.  The hierarchy consists of 
nonlocality, steering, entanglement, quantum discord and coherence. The heirarchy between the different correlations is
\begin{eqnarray}
 {\rm nonlocal} \subseteq {\rm steering}  \subseteq {\rm entanglement}   \subseteq  {\rm discord} \subseteq {\rm coherence}. \nonumber
\end{eqnarray}
Quantum resource theories provide a unified mathematical framework to describe the different correlations.  If $\mathcal{D}(\mathcal{H})$ is 
the set of all density matrices (i.e., all valid quantum states) in the Hilbert space $\mathcal{H}$
\begin{equation}
    \mathcal{D}(\mathcal{H}) = \{ \rho | \rho \geq 0, \; \rho = \rho^{\dagger}, \; {\rm Tr} \rho =1 \},
    \label{ddef}
\end{equation}
we may construct a resource theory to describe the different types of quantum correlations.  A quantum resource theory has three 
core ingredients namely: (a) free states, (b) free operations and (c) resource quantifiers.  In this section we examine the effect of 
noninertial dynamics on these three components of quantum resource theory.

\subsection{Stinespring dilation}

To analyze the effects of noninertial motion on a quantum systems, we employ the mathematical structure provided by the Stinespring 
dilation theorem.  Stinespring's theorem states that any completely positive trace preserving map (CPTP) representing a physically 
valid quantum process can be realized as a unitary interaction on an enlarged Hilbert space followed by a partial trace.  
In this sense, every apparently noisy or irreversible quantum evolution can be understood as arising from a reversible process 
acting on an extended system that includes additional, generally inaccessible, degrees of freedom.  In the relativistic setting, a 
quantum system confined to a region of space time may be treated as an open system with an Hilbert space $\mathcal{H}$.  When the 
system is viewed from a noninertial frame, the effective evolution perceived by the observer becomes nonunitary.  This is because 
acceleration introduces interactions with additional field modes and leads to thermal-like effects such as those associated with 
the Unruh effect.  Consequently, the accelerated observer describes state evolution using a CPTP map as shown in Eq. (\ref{maptheorem1}).
Using Stinespring's theorem, this CPTP map $\mathcal{E}$ can be written as a unitary operation on a larger Hilbert space that 
captures the inertial to noninertial transformation and an associated increase in dimensionality.  In the Rindler coordinates, this enlarged Hilbert space factorizes into two casually disconnected regions namely the Rindler mode I and Rindler mode II.  
Since an observer in uniform acceleration has access only to one of the modes, the physical state as shown by the noninertial 
observer is obtained by tracing out the other mode.  The partial trace leads to a loss of quantum correlations and coherence. 
Thus using Stinespring dilation establishes an open quantum system framework for noninertial evolution.  But unlike open 
quantum system, the Hilbert space associated with a quantum state increases during the noninertial evolution.  

By framing the noninertial evolution as a Stinespring dilation, we can establish a theoretical connection between the quantum resources 
in the inertial frame and the change in the quantum resources as seen by a noninertial observer.  In the following discussion, we 
explain how the fundamental components of a resource theory are modified under noninertial evolution.

\begin{figure}
\includegraphics[width=\linewidth]{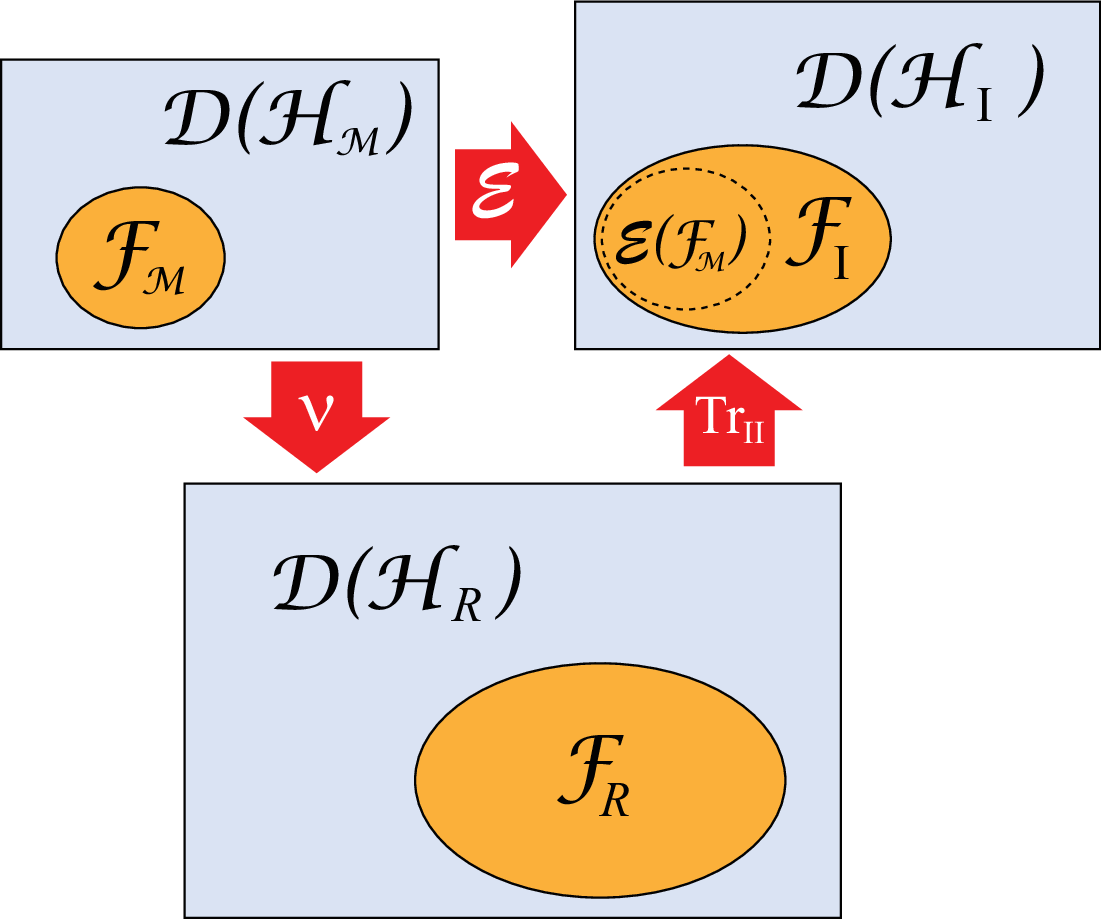} 
  \caption{Relationship between the operator spaces and free states under inertial and noninertial motion.  The top left corresponds to the operator space $ \mathcal{D}(\mathcal{H}_{\mathcal{M}}) $ for the system under inertial motion.  Within it are the free states $ \mathcal{F}_{\mathcal{M}}$.  Under Stinespring dilation, 
  the operator space expands to $\mathcal{D}( \mathcal{H}_{R}) $.  Tracing out Rindler mode II results in the operator space for Rindler mode I $ \mathcal{D}(\mathcal{H}_{\mathrm I}) $ with its associated free state $ \mathcal{F}_{\mathrm I} $.  We show the case where $ \mathcal{E} $ is a free operation of the resource theory (NRNG map), such that the mapped free states $ \mathcal{E}( \mathcal{F}_{\mathcal{M}})  $ are within  $ \mathcal{F}_{\mathrm I} $. }
\label{fig2}
\end{figure}

\section{Free states under noninertial motion}
\label{freestates}

In the framework of quantum resource theory, free states are those quantum states that do not possess the particular resource under 
consideration.  A generic free state associated with a Hilbert space $\mathcal{H}$ is denoted by $ \mathcal{F}  \subseteq \mathcal{D}(\mathcal{H}) $,  where $ \mathcal{D} $ is defined in (\ref{ddef}) and is the operator space of density matrices for a given Hilbert space. To start, let us consider an inertial system with Hilbert space $\mathcal{H}_{\mathcal{M}} $, which has an associated set of free states which have zero resource (see Fig. \ref{fig2})
\begin{equation}
    \mathcal{F}_{\mathcal{M}}  \subseteq \mathcal{D}(\mathcal{H}_{\mathcal{M}}). 
\end{equation}
For example, the set of separable states constitute the free states in the resource theory of entanglement. 

When a quantum state undergoes noninertial motion, its evolution is described by a CPTP map.  The corresponding 
Hilbert space enlarges from $\mathcal{H}_{\mathcal{M}}$ in the inertial frame to $\mathcal{H}_{R}$ in the noninertial frame (see Fig. \ref{fig2}).  
For example, if a two qubit system in an inertial frame has an Hilbert space 
$\mathcal{H}_{\mathcal{M}} = \mathbb{C}^{2} \otimes \mathbb{C}^{2}$, and one of the qubits 
undergoes noninertial motion, the Hilbert space becomes $  \mathcal{H}_{R}  = \mathbb{C}^{2} \otimes \mathbb{F} \otimes \mathbb{F} $, where $\mathbb{F}$ is the 
infinite dimensional Fock space for one mode. 
According to Stinespring's dilation formalism, the Kraus map $\mathcal{E}$ 
defined in Eq. (\ref{maptheorem1}) involves the redistribution of the initial quantum resource over the field modes.  These 
quantum resources are split between the two causally disconnected Rindler regions.  As the noninertial observer traces out one of the 
Rindler modes, part of the quantum resource becomes inaccessible.

Denoting the Hilbert space associated with the Rindler I and II regions as $ \mathcal{H}_{\mathrm I} $ and $ \mathcal{H}_{\mathrm{II}} $ respectively, after tracing out region II, the accessible region corresponds to $ \mathcal{H}_{\mathrm I} $ alone.  This space has its own set of free states which we denote as 
\begin{equation}
    \mathcal{F}_{\mathrm I} \subseteq \mathcal{D}(\mathcal{H}_{\mathrm I}). 
\end{equation}
We note that $ \mathcal{F}_{\mathrm I} $ is defined purely with respect to the space $ \mathcal{D}(\mathcal{H}_{\mathrm I}) $ (i.e. not with respect to the mapping $\mathcal{E}$).

In our discussion below, we will be primarily concerned with resource theories where the noninertial motion does not generate the resource.  

\begin{definition}[Noninertial Resource Nongenerating (NRNG)]
\label{theorem3}
Consider a CPTP map
\begin{equation}
   \mathcal{E}(\rho) = \sum_{k} A_{k} \rho A_{k}^{\dagger}, \qquad \sum_{k} A_{k} A_{k}^{\dagger} = I
\end{equation}
of the form given by Theorem \ref{theorem1}.  If  $\mathcal{E}(\mathcal{F}_{\mathcal{M}}) \subseteq \mathcal{F}_{\mathrm{I}} $  then the resource theory is noninertial and resource nongenerating (NRNG). 
\end{definition}

We now explore the relationship between the original inertial free states $ \mathcal{F}_{\mathcal{M}} $ in a NRNG resource theory and how they map under $\mathcal{E}$.

\begin{theorem}[Free operations under noninertial motion]
\label{theorem2}

Let $\mathcal{H}_{\mathcal{M}}$ denote the inertial Hilbert space, and let 
\[
\mathcal{V} : \mathcal{H}_{\mathcal{M}} \longmapsto \mathcal{H}_{\mathrm I} \otimes  \mathcal{H}_{\mathrm{II}},
\]
be an isometry implementing the Stinespring dilation associated with the inertial to noninertial transformation.  The physically 
observable noninertial channel is obtained by tracing out the inaccessible Rindler region {\rm II} according to 
\[
\mathcal{E}: \mathcal{D}(\mathcal{H}_{\mathcal{M}}) \longmapsto \mathcal{D}(\mathcal{H}_{\mathrm I}),  \qquad
\mathcal{E} (\rho) = {\rm Tr_{\mathrm{II}}} \left[ \mathcal{V} \rho \mathcal{V}^{\dagger}    \right] .
\]
Let $\mathcal{F}_{\mathcal{M}} \subseteq \mathcal{D}(\mathcal{H}_{\mathcal{M}})$ and $\mathcal{F}_{\mathrm{I}} \subseteq \mathcal{D}(\mathcal{H}_{\mathrm{I}} )$
denote the set of free states of a quantum resource theory in the inertial and noninertial description respectively.  Then 
\begin{eqnarray}
\mathcal{E} (\mathcal{F}_{\mathcal{M}}) \subseteq \mathcal{F}_{\mathrm{I}}  \Longleftrightarrow & &   \mathcal{E}  \; \hbox {is a free operation} \nonumber
\\
& &   \hbox {of the resource theory} . \label{accessibleresource}
\end{eqnarray}
Moreover, if $\exists \rho \in \mathcal{F}_{\mathcal{M}}$ for which
\begin{equation}
    {\rm Tr_{II}} \left[ \mathcal{V} \rho \mathcal{V}^{\dagger} \right] \notin \mathcal{F}_{\mathrm{I}} 
\end{equation}
then the failure to preserve free states arises solely from the partial trace over the inaccessible region {\rm II}.
\end{theorem}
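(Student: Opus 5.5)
The equivalence in (\ref{accessibleresource}) is essentially definitional, so the plan is to fix the definition of a free operation and then check both directions. In the setting with distinct input and output spaces, I take a free operation of the resource theory to be a CPTP map $\mathcal{E}:\mathcal{D}(\mathcal{H}_{\mathcal{M}})\to\mathcal{D}(\mathcal{H}_{\mathrm I})$ that is resource nongenerating, i.e.\ it sends every free state of the input space into the free set of the output space. This is the weakest axiom shared by all resource theories, and it matches Definition~\ref{theorem3}.

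The first step is to confirm that $\mathcal{E}$ is a legitimate channel between these spaces. Theorem~\ref{theorem1} and the properties in Secs.~\ref{sec:prop1}--\ref{sec:prop4} give $\mathcal{E}$ in Kraus form and show it is CPTP. Next I construct the Stinespring isometry explicitly: for each accelerated mode set
\[
\mathcal{V}=\hat S_\Omega(r)\,(\,\cdot\,\otimes|0\rangle_{\mathrm{II}}),
\]
composed with the Minkowski-to-Rindler identification $|\ell\rangle^{\mathcal M}\mapsto \hat S_\Omega (b_{\mathrm I}^\dagger)^\ell/\sqrt{\ell!}\,|0\rangle_{\mathrm I}|0\rangle_{\mathrm{II}}$, and take identity on the inertial parties. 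This is an isometry because $\hat S_\Omega$ is unitary. Writing $\mathrm{Tr}_{\mathrm{II}}$ in the Fock basis $\{|k\rangle_{\mathrm{II}}\}$ gives Kraus operators $\langle k|_{\mathrm{II}}\mathcal{V}$, and Eq.~(\ref{excitedstate}) shows these coincide with $\mathsf{A}_k$. So $\mathcal{E}(\rho)=\mathrm{Tr}_{\mathrm{II}}[\mathcal{V}\rho\mathcal{V}^\dagger]$ is the same channel as in Theorem~\ref{theorem1}.

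With $\mathcal{E}$ established as a channel, the two directions follow.

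\textbf{($\Rightarrow$)} If $\mathcal{E}(\mathcal{F}_{\mathcal M})\subseteq\mathcal{F}_{\mathrm I}$, then $\mathcal{E}$ is a CPTP map that is resource nongenerating, hence free by the chosen definition.

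\textbf{($\Leftarrow$)} If $\mathcal{E}$ is free, then by that same defining property it maps free states to free states.

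For stronger notions of free operation, such as incoherent or LOCC maps, only ($\Leftarrow$) survives in general. I would state the theorem relative to the resource-nongenerating class, since that is the class Definition~\ref{theorem3} already adopts.

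For the ``moreover'' claim, take $\rho\in\mathcal{F}_{\mathcal M}$ with $\mathcal{E}(\rho)\notin\mathcal{F}_{\mathrm I}$. The map is the composition $\rho\mapsto\mathcal{V}\rho\mathcal{V}^\dagger\mapsto\mathrm{Tr}_{\mathrm{II}}$. The first stage is an isometric, hence reversible, embedding: $\mathcal{V}^\dagger(\mathcal{V}\rho\mathcal{V}^\dagger)\mathcal{V}=\rho$, so no information about $\rho$ is lost there, and the global pure or mixed structure on $\mathcal{H}_{\mathrm I}\otimes\mathcal{H}_{\mathrm{II}}$ is unitarily equivalent to the inertial one. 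Therefore the only irreversible, non-unitary component is the partial trace over region~II, and any failure to preserve freeness must be attributed to it.

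The main obstacle is this last step, because it is interpretive rather than strictly mathematical. Freeness on the dilated space $\mathcal{H}_{\mathrm I}\otimes\mathcal{H}_{\mathrm{II}}$ is not defined by the statement. I would make it precise by defining the dilated free set as $\mathcal{V}\mathcal{F}_{\mathcal M}\mathcal{V}^\dagger$, which is the natural unitary transport of the inertial free set. With that choice the isometric stage preserves freeness by construction, and the partial trace is the sole culprit.
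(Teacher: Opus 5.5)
Your proposal follows the paper's proof. The biconditional is treated as definitional: a free operation is one that maps every free input state to a free output state. The ``moreover'' clause is argued from the reversibility of the isometry $\mathcal{V}$ versus the irreversibility of $\mathrm{Tr}_{\mathrm{II}}$.

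Your additions go beyond the paper, which leaves these assumptions implicit, but they are consistent with it:
\begin{itemize}
\item verifying that $\langle k|_{\mathrm{II}}\mathcal{V}$ reproduces $\mathsf{A}_k$;
\item noting that only ($\Leftarrow$) survives for stronger classes of free operations, such as incoherent or LOCC operations;
\item fixing the dilated free set as $\mathcal{V}\mathcal{F}_{\mathcal M}\mathcal{V}^\dagger$, so that the second clause holds by construction.
\end{itemize}
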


\begin{proof}
    According to the basic axioms of quantum resource theories,  an operation is free iff it maps every input free state to an output state 
which is also free.  Hence we have a biconditional relation as shown in Eq. (\ref{accessibleresource}).  

For the second part we note that the isometry $\mathcal{V}$ is reversible on its image and therefore cannot destroy global correlations. 
As a matter of fact it only redistributes them between the two Rindler regions.  Meanwhile the partial trace over the inaccessible region
$\mathcal{H}_{\mathrm{II}}$ introduces mixedness and is irreversible.  Hence the accessible correlations are reduced only due to the partial trace 
operation.  
\end{proof}

\begin{theorem}[Geometry of free states under noninertial maps]
\label{theorem2.0}
Let $\mathcal{F}_{M} \subset \mathcal{D}(\mathcal{H}_{M})$ and 
$\mathcal{F}_{\mathrm{I}} \subset \mathcal{D}(\mathcal{H}_{\mathrm{I}})$ denote the set 
of free states of a quantum resource theory in the inertial and noninertial descriptions respectively.  
The set of free states is classified into (a) affine set (b) convex set and (c) nonconvex set depending on the resource theory as given 
below:
\begin{eqnarray}
& & \hbox {Affine set:} \nonumber \\
& & \forall  \rho \text{ s.t.} \; \rho  = \rho_{1} + (1 - \lambda) \rho_{2}; \; 
  \rho_{1}, \rho_{2} \in \mathcal{F},\ \lambda \in \mathbb{R}  \Rightarrow  \rho \in \mathcal{F} \nonumber  \quad\\
& & \hbox{Convex set:} \nonumber \\
& & \forall \rho_{1}, \rho_{2} \in \mathcal{F} \Rightarrow p \rho_{1} + (1 - p) \rho_{2} \in \mathcal{F}, \; p \in [0,1]
\nonumber \qquad \quad \\
& & \hbox{Nonconvex set:} \nonumber \\
& & \exists \rho_{1}, \rho_{2} \in \mathcal{F} \Rightarrow p \rho_{1} + (1 - p) \rho_{2} \notin \mathcal{F}, \; p \in [0,1]. 
\nonumber 
\end{eqnarray}
If $\mathcal{E}$ is a CPTP map describing the inertial to noninertial transformation, then $\mathcal{E}(\mathcal{F}_{\mathcal{M}})$ 
preserves the geometric structure of the free states iff $\mathcal{E}$ is a free operation of the resource theory.  
\end{theorem}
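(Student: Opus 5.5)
The plan is to treat ``$\mathcal{E}(\mathcal{F}_{\mathcal{M}})$ preserves the geometric structure'' as meaning two things. First, $\mathcal{E}(\mathcal{F}_{\mathcal{M}})$ is a subset of $\mathcal{F}_{\mathrm I}$. Second, it carries the same type (affine, convex, or nonconvex) as $\mathcal{F}_{\mathcal{M}}$, inside the ambient set $\mathcal{F}_{\mathrm I}$. Both directions will then rest on the linearity of $\mathcal{E}$ (Sec.~\ref{sec:prop1}) together with Theorem~\ref{theorem2}, which identifies ``free operation'' with $\mathcal{E}(\mathcal{F}_{\mathcal{M}})\subseteq\mathcal{F}_{\mathrm I}$.

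For the direction ``free operation $\Rightarrow$ structure preserved'', I will invoke Theorem~\ref{theorem2} to get $\mathcal{E}(\mathcal{F}_{\mathcal{M}})\subseteq\mathcal{F}_{\mathrm I}$. I then treat the three cases in turn. In the convex case, take $\sigma_i=\mathcal{E}(\rho_i)$ with $\rho_i\in\mathcal{F}_{\mathcal{M}}$. Linearity gives $p\sigma_1+(1-p)\sigma_2=\mathcal{E}(p\rho_1+(1-p)\rho_2)$. The argument $p\rho_1+(1-p)\rho_2$ lies in $\mathcal{F}_{\mathcal{M}}$ by convexity, so the image set is convex and sits inside $\mathcal{F}_{\mathrm I}$. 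The affine case is identical with $\lambda\in\mathbb{R}$. Linearity of $\mathcal{E}$ extends to real affine combinations, and trace preservation keeps unit trace. So whenever the affine combination is a valid state in $\mathcal{F}_{\mathcal{M}}$, its image is the corresponding affine combination of images. For the nonconvex case, I will exhibit witnesses $\rho_1,\rho_2$ whose mixture leaves $\mathcal{F}_{\mathcal{M}}$ and argue that their images remain in $\mathcal{F}_{\mathrm I}$. The only geometric claim needed there is that $\mathcal{E}(\mathcal{F}_{\mathcal{M}})$ is a union of images of convex pieces of $\mathcal{F}_{\mathcal{M}}$, and linearity preserves each piece.

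For the converse, I argue by contraposition. Suppose $\mathcal{E}$ is not free. By Theorem~\ref{theorem2} there is then $\rho\in\mathcal{F}_{\mathcal{M}}$ with $\mathcal{E}(\rho)\notin\mathcal{F}_{\mathrm I}$. In that case the image is not a subset of the free set of the noninertial theory, so the affine or convex structure cannot be realized as a substructure of $\mathcal{F}_{\mathrm I}$, which violates the first half of the definition of preservation. I will also use the second part of Theorem~\ref{theorem2} to attribute this failure to the partial trace over region II. The isometry $\mathcal{V}$ preserves all affine relations exactly, so any breakdown must come from ${\rm Tr}_{\mathrm{II}}$.

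The main obstacle is the nonconvex case and, more generally, making ``preserves the geometric structure'' precise. Linearity can only push convex or affine structure forward; it cannot guarantee that a nonconvex set maps to a nonconvex set, because images of distinct points may coincide or fill in gaps. My first attempt is to define preservation relative to $\mathcal{F}_{\mathrm I}$, requiring only that the image is contained in $\mathcal{F}_{\mathrm I}$ and inherits every convex or affine combination that $\mathcal{F}_{\mathcal{M}}$ has. With that definition, the nonconvex case reduces to the convex-piece argument. If that does not work, I would restrict the nonconvex claim to the containment statement only.
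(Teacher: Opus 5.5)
Your proposal follows the paper's proof: for the forward direction both use $\mathcal{E}(\mathcal{F}_{\mathcal{M}})\subseteq\mathcal{F}_{\mathrm I}$ together with linearity of the CPTP map to send convex and affine combinations of free states to the corresponding combinations inside $\mathcal{F}_{\mathrm I}$. For the converse, both take a free $\rho$ with $\mathcal{E}(\rho)\notin\mathcal{F}_{\mathrm I}$ and conclude that the image can no longer be identified with the noninertial free set. As in the paper, the converse holds only because containment in $\mathcal{F}_{\mathrm I}$ is built into what ``preserving the geometric structure'' means, and the nonconvex case is settled only as a containment statement; you state both points explicitly, which the paper does not.
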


\begin{proof}
    Let $\mathcal{E}: \mathcal{D}(\mathcal{H}_{M}) \mapsto \mathcal{D}(\mathcal{H}_{\mathrm{I}})$ be an arbitrary CPTP map and not necessarily a 
free operation.  In this situation, there will be at least one free state $\rho \in \mathcal{F}_{M}$ such that 
$\mathcal{E}(\rho) \notin \mathcal{F}_{\mathrm{I}}$.  Hence $\mathcal{E}(\mathcal{F}_{\mathcal{M}})$ cannot be identified with the free states in 
the noninertial description.  Consequently, the set $\mathcal{E}(\mathcal{F}_{\mathcal{M}})$ is not constrained to retain the geometric
properties of $\mathcal{F}_{\mathcal{M}}$.  Hence convexity or affinity may be lost.  On the contrary if $\mathcal{E}$ is a free operation
i.e., $\mathcal{E}(\mathcal{F}_{\mathcal{M}}) \subseteq \mathcal{F}_{\mathrm{I}}$, then the image of any allowed combination of free states remain 
free.  Since $\mathcal{E}$ is a CPTP map, the convex and linear combinations of states in $\mathcal{F}_{\mathcal{M}}$ are mapped to 
corresponding combinations in $\mathcal{F}_{\mathrm{I}}$.  Therefore the geometric structure of the free states is preserved.  
\end{proof}

\section{Impact of Noninertial motion on Free operations}
\label{freeoperations}

The free operations in any quantum resource theory are CPTP maps which map a free state to another 
free state. If $\Phi$ is a free operation then
\begin{equation}
    \Phi(\mathcal{F}) \subseteq \mathcal{F},
\end{equation}
where $\mathcal{F} \subseteq \mathcal{D}(\mathcal{H})$ is the set of free states.  In this section, we discuss the effect of noninertial motion on free operations.  

Let $\mathcal{F}_{\mathcal{M}} \subseteq \mathcal{D}(\mathcal{H}_{\mathcal{M}})$ and 
$\mathcal{F}_{\mathrm{I}} \subseteq \mathcal{D}(\mathcal{H}_{\mathrm{I}} )$ represent the 
free sets in the inertial and noninertial frames respectively.  Consider a CPTP map in the inertial frame 
\begin{equation}
    \Phi_{\mathcal{M}}(\rho) = \sum_{n} K_{n} \rho K_{n}^{\dagger},   \qquad \sum_{n} K_{n}^{\dagger} K_{n} = I
\end{equation}
that satisfies $\Phi_{\mathcal{M}}(\mathcal{F}_{\mathcal{M}}) \subseteq \mathcal{F}_{\mathcal{M}}$, the axiom of a free operation.  

We establish the following theorems for free operations under relativistic motion.

\begin{theorem}[Composition with relativistic evolution]
\label{theorem4}

Let $\Phi_{\mathcal{M}}$ be a free operation on $\mathcal{H}_{\mathcal{M}}$,  $\Phi_{\mathrm{I}} $  be a free operation on $\mathcal{H}_{\mathrm{I}} $, and the  resource theory be NRNG.  Then 

\noindent(a) If $\Phi_{\mathrm{I}} (\mathcal{F}_{\mathrm{I}} ) \subseteq \mathcal{F}_{\mathrm{I}} $ then
\begin{align}
    (\Phi_{\mathrm{I}}  \circ \mathcal{E}) (\mathcal{F}_{\mathcal{M}}) \subseteq \mathcal{F}_{\mathrm{I}} 
    \label{compa}
\end{align}

\noindent (b) If $\Phi_{\mathcal{M}}(\mathcal{F}_{\mathcal{M}}) \subseteq \mathcal{F}_{\mathcal{M}}$ and 
$\mathcal{E}(\mathcal{F}_{\mathcal{M}}) \subseteq  \mathcal{F}_{R}$, then
\begin{align}
    ( \mathcal{E} \circ \Phi_{\mathcal{M}} ) ( \mathcal{F}_{\mathcal{M}} ) \subseteq \mathcal{F}_{\mathrm{I}} .
    \label{compb}
\end{align}
\end{theorem}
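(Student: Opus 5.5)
Both parts reduce to chasing set inclusions through a composition of maps. The two facts used are that the resource theory is NRNG, i.e.\ $\mathcal{E}(\mathcal{F}_{\mathcal{M}}) \subseteq \mathcal{F}_{\mathrm I}$ as in Definition~\ref{theorem3}, and that applying a map to a subset gives a subset of the image: if $S \subseteq T$ then $\Phi(S) \subseteq \Phi(T)$.

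For part (a), take an arbitrary $\rho \in \mathcal{F}_{\mathcal{M}}$. By the NRNG property, $\sigma = \mathcal{E}(\rho) \in \mathcal{F}_{\mathrm I}$. The hypothesis $\Phi_{\mathrm I}(\mathcal{F}_{\mathrm I}) \subseteq \mathcal{F}_{\mathrm I}$ then gives $\Phi_{\mathrm I}(\sigma) \in \mathcal{F}_{\mathrm I}$. Since $\rho$ was arbitrary, this proves (\ref{compa}). In set form:
\begin{align}
(\Phi_{\mathrm I}\circ\mathcal{E})(\mathcal{F}_{\mathcal{M}}) = \Phi_{\mathrm I}\big(\mathcal{E}(\mathcal{F}_{\mathcal{M}})\big) \subseteq \Phi_{\mathrm I}(\mathcal{F}_{\mathrm I}) \subseteq \mathcal{F}_{\mathrm I}.
\end{align}
I will also note that $\Phi_{\mathrm I}\circ\mathcal{E}$ is a composition of CPTP maps. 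Its Kraus operators are the products $K_n A_k$, where the $A_k$ are those of Theorem~\ref{theorem1}. So the composition is again CPTP and, by Theorem~\ref{theorem2}, is itself a free operation from the inertial to the noninertial description.

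For part (b), take $\rho \in \mathcal{F}_{\mathcal{M}}$. The hypothesis on $\Phi_{\mathcal{M}}$ gives $\Phi_{\mathcal{M}}(\rho) \in \mathcal{F}_{\mathcal{M}}$. Applying $\mathcal{E}$ then lands in $\mathcal{E}(\mathcal{F}_{\mathcal{M}})$, which lies in $\mathcal{F}_{\mathrm I}$ by NRNG. Hence
\begin{align}
(\mathcal{E}\circ\Phi_{\mathcal{M}})(\mathcal{F}_{\mathcal{M}}) \subseteq \mathcal{E}(\mathcal{F}_{\mathcal{M}}) \subseteq \mathcal{F}_{\mathrm I}.
\end{align}

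The only real obstacle is notational. Part (b) of the statement writes the hypothesis as $\mathcal{E}(\mathcal{F}_{\mathcal{M}}) \subseteq \mathcal{F}_{R}$, but $\mathcal{E}$ maps into $\mathcal{D}(\mathcal{H}_{\mathrm I})$, since Rindler region II has already been traced out. I will therefore read $\mathcal{F}_R$ as $\mathcal{F}_{\mathrm I}$; this inclusion is exactly what the standing NRNG assumption provides. If instead $\mathcal{F}_R \subseteq \mathcal{D}(\mathcal{H}_{\mathrm I}\otimes\mathcal{H}_{\mathrm{II}})$ were meant, the hypothesis would concern the dilated state $\mathcal{V}\rho\mathcal{V}^\dagger$. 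One would then need the extra requirement that the partial trace over region II maps $\mathcal{F}_R$ into $\mathcal{F}_{\mathrm I}$. That holds for entanglement and coherence with respect to a product basis, but it has to be assumed for a general resource theory. I will state this reading explicitly, then run the chain of inclusions above.
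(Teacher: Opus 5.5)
Your proof is correct and follows the paper's own element-wise inclusion chase. In (a), NRNG puts $\mathcal{E}(\rho)$ in $\mathcal{F}_{\mathrm I}$ and $\Phi_{\mathrm I}$ keeps it there; in (b), $\Phi_{\mathcal{M}}$ keeps $\rho$ in $\mathcal{F}_{\mathcal{M}}$ and then $\mathcal{E}$ carries it into $\mathcal{F}_{\mathrm I}$, and since the paper's proof of (b) also invokes $\mathcal{E}(\mathcal{F}_{\mathcal{M}})\subseteq\mathcal{F}_{\mathrm I}$, your reading of $\mathcal{F}_R$ as $\mathcal{F}_{\mathrm I}$ matches the authors' (your side remark about the Kraus operators $K_nA_k$ is harmless but unnecessary).
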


\begin{proof}
    {\it (a)} Let $\rho \in \mathcal{F}_{\mathcal{M}}$.  Then  by Theorem \ref{theorem3}, $\mathcal{E}(\rho) \in \mathcal{F}_{\mathrm{I}} $. Since $\Phi_{\mathrm{I}} $ preserves $\mathcal{F}_{\mathrm{I}} $, (\ref{compa}) holds.  {\it (b)} Let $\rho \in \mathcal{F}_{\mathcal{M}}$.  Since $\Phi_{\mathcal{M}}$ is free,  $\Phi (\rho) \in \mathcal{F}_{\mathcal{M}}$.  Then applying $\mathcal{E}$
and using the assumption $\mathcal{E}(\mathcal{F}_{\mathcal{M}}) \subseteq \mathcal{F}_{\mathrm{I}}$, (\ref{compb}) holds. 
\end{proof}

\begin{theorem}[Convex mixtures after embedding]
\label{theorem5}

Let $V: \mathcal{H}_{\mathcal{M}} \mapsto \mathcal{H}_{R}$ be an isometric embedding satisfying $V^{\dagger} V = I$ for which we define an embedded 
operation \cite{kraus1971general,hellwig1969pure,hellwig1970operations}
\begin{equation}
    \widetilde{\Phi}_{R} (\rho) = V (\Phi_{\mathcal{M}}(V^{\dagger} \rho V))V^{\dagger}  .
\end{equation}
Let the  resource theory be NRNG such that $\mathcal{E}$ is a free operation and $\widetilde{\Phi}_{R}$ preserve $\mathcal{F}_{R}$. Then for any $p \in [0,1]$, the convex combination 
\begin{align}
    \Lambda_{R} = p \mathcal{E} + (1-p) \widetilde{\Phi}_{R}
\end{align}
is CPTP and satisfies 
\begin{align}
    \Lambda_{R} ( \mathcal{F}_{R} ) \subseteq \mathcal{F}_{R} .
\end{align}
\end{theorem}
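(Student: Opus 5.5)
The proof splits into two parts: first show that $\Lambda_R$ is CPTP, then show that it maps $\mathcal{F}_R$ into itself. Both parts follow the same template, namely that a convex combination of two maps inherits any property that is preserved under mixing.

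\emph{Complete positivity and trace preservation.} By Theorem \ref{theorem1}, $\mathcal{E}$ is CPTP with Kraus operators $A_k$. For $\widetilde{\Phi}_R$, I would exhibit Kraus operators directly:
\begin{equation}
\widetilde{\Phi}_R(\rho)=\sum_n (VK_nV^\dagger)\,\rho\,(VK_nV^\dagger)^\dagger .
\end{equation}
This form gives complete positivity immediately. Trace preservation is more delicate, and I expect it to be the main obstacle. Using $V^\dagger V=I$, one computes
\begin{equation}
\sum_n VK_n^\dagger V^\dagger V K_n V^\dagger = VV^\dagger = P_V ,
\end{equation}
which is the projector onto the image of $V$, not the identity. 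So $\widetilde{\Phi}_R$ is only trace-nonincreasing in general, and it is trace preserving only on states supported in $V\mathcal{H}_{\mathcal{M}}$.

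I would try to repair this first by completing the embedding with the standard ``discard and prepare'' term,
\begin{equation}
\widetilde{\Phi}_R(\rho)\to\widetilde{\Phi}_R(\rho)+\mathrm{Tr}[(I-P_V)\rho]\,\sigma_0 ,
\end{equation}
where $\sigma_0$ is a fixed free state in $\mathcal{F}_R$. The added map is CPTP-completing: its Kraus operators are $\sqrt{\sigma_0}$-weighted rank-one maps, so it is completely positive and restores the trace. It also does not generate resource, because its output on the complement is always $\sigma_0$. Alternatively, I could restrict the domain to states supported on the image of $V$. In either reading, $\widetilde{\Phi}_R$ is CPTP. I interpret the hypothesis that ``$\widetilde{\Phi}_R$ preserves $\mathcal{F}_R$'' as including this completed version.

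Once both maps are CPTP, the CPTP property of $\Lambda_R$ follows by a standard argument. Its Kraus set is $\{\sqrt{p}\,A_k\}\cup\{\sqrt{1-p}\,\widetilde K_n\}$, so
\begin{equation}
\sum_k pA_k^\dagger A_k+\sum_n(1-p)\widetilde K_n^\dagger\widetilde K_n=pI+(1-p)I=I .
\end{equation}
Complete positivity also follows because $\mathrm{id}\otimes\Lambda_R = p(\mathrm{id}\otimes\mathcal{E})+(1-p)(\mathrm{id}\otimes\widetilde{\Phi}_R)$ is a positive combination of positive maps.

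\emph{Free-state preservation.} Take $\rho\in\mathcal{F}_R$. By the NRNG hypothesis (Definition \ref{theorem3}), with $\mathcal{E}$ regarded as free on the relevant space, $\mathcal{E}(\rho)$ is free. By hypothesis, $\widetilde{\Phi}_R(\rho)$ is also free. Then $\Lambda_R(\rho)=p\,\mathcal{E}(\rho)+(1-p)\widetilde{\Phi}_R(\rho)$ is a convex mixture of two free states, so it lies in $\mathcal{F}_R$ provided $\mathcal{F}_R$ is convex.

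This convexity requirement is the second point where an extra assumption enters. I would invoke the convex or affine cases of Theorem \ref{theorem2.0} and state explicitly that the claim relies on convexity of $\mathcal{F}_R$. For nonconvex theories such as discord, the mixture can leave the free set, and the statement can fail.
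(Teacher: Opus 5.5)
Your proof follows the same two steps as the paper's: $\Lambda_R$ is CPTP as a convex combination of CPTP maps, and $\Lambda_R(\rho)$ is free as a convex mixture of the free states $\mathcal{E}(\rho)$ and $\widetilde{\Phi}_R(\rho)$. The difference is that you actually check the first step, and your check exposes a real gap in the paper's version.

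The paper asserts that $\widetilde{\Phi}_R$ is CPTP ``because it is a composition of CPTP maps and an isometric embedding.'' However, the inner map $\rho\mapsto V^\dagger\rho V$ is a compression, not an embedding. Your computation $\sum_n (VK_nV^\dagger)^\dagger(VK_nV^\dagger)=VV^\dagger=P_V$ shows that $\widetilde{\Phi}_R$ is only trace-nonincreasing. This is because $P_V\neq I$ whenever $\mathcal{H}_R$ is strictly larger than $V\mathcal{H}_{\mathcal{M}}$, as it is here with Fock modes. Your discard-and-prepare completion, or the restriction to states supported on $V\mathcal{H}_{\mathcal{M}}$, is therefore needed rather than optional.

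You also do not have to build freeness of the completed map into the hypothesis. With $t=\mathrm{Tr}[P_V\rho]>0$, its output is $t\,\sigma_1+(1-t)\,\sigma_0$ with $\sigma_1=\widetilde{\Phi}_R(\rho)/t$. This is a mixture of free states once convexity is granted and the normalized output of $\widetilde{\Phi}_R$ is free.

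On the second step, the paper simply invokes convexity of $\mathcal{F}_R$ as a ``standard axiom,'' even though its own Theorem~\ref{theorem2.0} admits nonconvex free sets. Your explicit caveat (e.g.\ discord) is therefore warranted.

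Both you and the paper leave one point implicit. $\mathcal{E}$ maps $\mathcal{D}(\mathcal{H}_{\mathcal{M}})$ to $\mathcal{D}(\mathcal{H}_{\mathrm I})$, while $\widetilde{\Phi}_R$ acts on $\mathcal{D}(\mathcal{H}_R)$. The mixture $\Lambda_R$, and your Kraus identity $pI+(1-p)I=I$, only make sense once $\mathcal{E}$ is regarded as a channel with the same input and output spaces as $\widetilde{\Phi}_R$. The hypothesis that $\mathcal{E}$ preserves $\mathcal{F}_R$ should be stated for that version.
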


\begin{proof}
    The map $\widetilde{\Phi}_{R}$ is CPTP because it is a composition of CPTP maps and an isometric embedding.  Since $\Lambda_{R}$ is a 
convex combination of CPTP maps acting on the same space it is CPTP as well.  Let $\rho \in \mathcal{F}_{R}$, then 
$\mathcal{E}(\rho) \in \mathcal{F}_{R}$ and $\widetilde{\Phi}_{R} (\rho) \in \mathcal{F}_{R}$.  Since $\mathcal{F}_{R}$ is convex according
to standard axioms of quantum resource theory,
\begin{equation}
    \Lambda(\rho) = p \; \mathcal{E}_{R} (\rho) + (1-p) \widetilde{\Phi}_{R}(\rho)  \in \mathcal{F}_{R} 
\end{equation}
and hence $\Lambda(\mathcal{F}_{R}) \subseteq \mathcal{F}_{R}$.
\end{proof}

\begin{theorem}[Tensor product composition]
\label{theorem6}

Let $\Phi_{\mathcal{M}_A} : \mathcal{D}(\mathcal{H}_{\mathcal{M}_A} ) \mapsto \mathcal{D}(\mathcal{H}_{\mathcal{M}_A} )$ be a free operation 
on subsystem $A$.  Let the resource theory be NRNG such that
$\mathcal{E}: \mathcal{D}(\mathcal{H}_{\mathcal{M}_B}) \mapsto \mathcal{D}(\mathcal{H}_{\mathrm{I}_B})$ is a free operation, where $\mathcal{H}_{\mathcal{M}_B }$ and $\mathcal{H}_{\mathrm{I}_B }$
are the spaces corresponding to the inertial and noninertial nature of the system $B$.  Assume that the composite free sets satisfy 
the local to global closure condition 
\begin{align}
    \rho_{AB} \in \mathcal{F}_{\mathcal{M}_{AB} } \Rightarrow {\rm Tr}_{B} (\rho_{AB}) \in \mathcal{F}_{\mathcal{M}_A}, \; 
    {\rm Tr}_{A} (\rho_{AB}) \in \mathcal{F}_{\mathcal{M}_B},
\end{align}
then the tensor product channel 
\begin{equation}
    \Phi_{\mathcal{M}} \otimes \mathcal{E}:  
    \mathcal{D}( \mathcal{H}_{\mathcal{M}_A} \otimes \mathcal{H}_{\mathcal{M}_B} ) \mapsto  
    \mathcal{D}( \mathcal{H}_{\mathcal{M}_A} \otimes \mathcal{H}_{\mathrm{I}_B} )
\end{equation}
preserves freeness
\begin{align}
    ( \Phi_{\mathcal{M}} \otimes \mathcal{E} ) \mathcal{F}_{\mathcal{M}_{AB}} \subseteq \mathcal{F}_{\mathrm{I}_{AB}} .
\end{align}
\end{theorem}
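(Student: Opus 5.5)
The strategy is to reduce the statement to the local free-operation properties of $\Phi_{\mathcal{M}_A}$ and $\mathcal{E}$, applied one factor at a time. The tensor-product channel factorizes as
\begin{equation}
\Phi_{\mathcal{M}_A}\otimes\mathcal{E}=(\Phi_{\mathcal{M}_A}\otimes\mathrm{id}_{\mathrm{I}_B})\circ(\mathrm{id}_{\mathcal{M}_A}\otimes\mathcal{E}),
\end{equation}
so it suffices to show that each factor maps free states to free states. The first factor goes from $\mathcal{F}_{\mathcal{M}_{AB}}$ to the mixed free set $\mathcal{F}_{\mathcal{M}_A\mathrm{I}_B}$, and the second from that mixed set to $\mathcal{F}_{\mathrm{I}_{AB}}$. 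This mirrors the composition argument of Theorem~\ref{theorem4}. By Theorem~\ref{theorem1}, the Kraus operators of $\mathrm{id}\otimes\mathcal{E}$ have the product form $\mathcal{I}\otimes\mathcal{A}_k$, so the extension to the inertial party $A$ is already built into the map.

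The closure condition only tells us that the marginals of a free state are free. On its own this does not determine whether the transformed global state is free. I would therefore first handle the product case. For a free state $\rho_{AB}=\rho_A\otimes\rho_B$, the closure condition gives $\rho_A\in\mathcal{F}_{\mathcal{M}_A}$ and $\rho_B\in\mathcal{F}_{\mathcal{M}_B}$. Then
\begin{equation}
(\Phi_{\mathcal{M}_A}\otimes\mathcal{E})(\rho_A\otimes\rho_B)=\Phi_{\mathcal{M}_A}(\rho_A)\otimes\mathcal{E}(\rho_B),
\end{equation}
and by NRNG (Definition~\ref{theorem3}) together with freeness of $\Phi_{\mathcal{M}_A}$, both factors are free. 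To conclude that the tensor product is free I will use the standard axiom that tensor products of free states are free.

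Next I extend to general free states. The aim is to write an arbitrary element of $\mathcal{F}_{\mathcal{M}_{AB}}$ as a convex combination of products of free local states, as for separable or incoherent states. Linearity of the channel (Sec.~\ref{sec:prop1}) and convexity of $\mathcal{F}_{\mathrm{I}_{AB}}$, as invoked in Theorem~\ref{theorem5} and Theorem~\ref{theorem2.0}, then carry the product result through the mixture.

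The main obstacle is this step. The stated closure condition, marginals free, is weaker than the decomposition I need. For theories such as discord, the free set is not convex, and free states need not be mixtures of free product states. I would first try to read the closure condition, together with the usual axioms (closure under tensor products, partial traces and appending free states), as guaranteeing that $\mathrm{id}_A\otimes\mathcal{E}$ is free whenever $\mathcal{E}$ is. This is essentially the ``completely free'' property, which makes the factorized argument go through without any convexity. If that fails, I would state the needed additional hypothesis explicitly, either convex-product structure or complete freeness of $\mathcal{E}$, and prove the theorem under it.
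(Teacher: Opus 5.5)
\textbf{The gap is real, but it lies in the statement.} The step you single out, passing from product free states to an arbitrary element of $\mathcal{F}_{\mathcal{M}_{AB}}$, cannot be closed from the stated hypotheses, because the statement is false without extra structure.

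Take discord with free states those classical on $A$ (or classical--classical). This is a nonconvex theory, which the paper explicitly allows. Every single-party state is the marginal of the free state $\rho_A\otimes|0\rangle\langle 0|$, so the closure condition forces the single-party free sets to be all states. Hence every channel on $A$ is free and $\mathcal{E}$ is trivially NRNG. For the classical-on-$A$ theory, $\mathrm{id}_A\otimes\mathcal{E}$ even preserves the composite free set.

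Let $\Phi_{\mathcal{M}}(\rho)=\langle 0|\rho|0\rangle\,|0\rangle\langle 0|+\langle 1|\rho|1\rangle\,|+\rangle\langle +|$ and $\rho_{AB}=\frac12(|00\rangle\langle 00|+|11\rangle\langle 11|)$. The output is $\frac12\bigl(|0\rangle\langle 0|\otimes\mathcal{E}(|0\rangle\langle 0|)+|+\rangle\langle +|\otimes\mathcal{E}(|1\rangle\langle 1|)\bigr)$. Since $\mathcal{E}(|1\rangle\langle 1|)$ has no vacuum component, projecting $B$ onto $|0\rangle$ and onto $|1\rangle$ leaves $A$-operators proportional to $|0\rangle\langle 0|$ and to $\tanh^2 r\,\cosh^{-2}r\,|0\rangle\langle 0|+\cosh^{-4}r\,|+\rangle\langle +|$. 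These do not commute, so the output is not classical on $A$.

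The Unruh map by itself also defeats your ``first try''. With classical--classical free states, $\mathrm{id}_A\otimes\mathcal{E}$ sends $\frac12(|0\rangle\langle 0|\otimes|+\rangle\langle +|+|1\rangle\langle 1|\otimes|-\rangle\langle -|)$ to a state whose $B$-conditionals are $\mathcal{E}(|\pm\rangle\langle\pm|)=D\pm C$. These satisfy $[D+C,D-C]=2[C,D]\neq 0$ for $r>0$: $C$ has a nonzero $|1\rangle\langle 2|$ entry, while the Fock-diagonal $D$ has entries at $n=1,2$ differing by $\tfrac12\cosh^{-6}r$. So neither the marginal condition nor the product Kraus form $\mathcal{I}\otimes\mathcal{A}_k$ of Theorem~\ref{theorem1} gives complete freeness.

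\textbf{The paper's proof does not close this gap either.} It derives $\Phi_{\mathcal{M}}(\rho_A)\in\mathcal{F}_{\mathcal{M}_A}$ and $\mathcal{E}(\rho_B)\in\mathcal{F}_{\mathrm{I}_B}$, then declares the output free ``by definition of the noninertial free set on the composite system''. That is valid in only two cases. The first is product inputs, which is exactly your first step. The second is if $\mathcal{F}_{\mathrm{I}_{AB}}$ is defined as all states with free marginals; the result is then immediate, since trace preservation makes the output marginals $\Phi_{\mathcal{M}}(\rho_A)$ and $\mathcal{E}(\rho_B)$. But that definition is the converse (local $\to$ global) of the stated closure condition, and it matches no standard free set; for entanglement it would include a Bell state. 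Your proposal is more careful than the paper's argument.

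\textbf{Your resolution is the right one:} state an extra hypothesis explicitly.
\begin{itemize}
\item[(i)] Free states are mixtures of products of local free states, and $\mathcal{F}_{\mathrm{I}_{AB}}$ is convex and closed under such products. Your argument then covers entanglement and coherence, and the stated marginal condition is never used.
\item[(ii)] Complete freeness of $\Phi_{\mathcal{M}}$ and $\mathcal{E}$. Your factorization then goes through without any convexity.
\end{itemize}
One small correction: classical--classical and classical-on-$A$ states \emph{are} mixtures of free product states. So for discord it is the convexity requirement in (i) that fails, not the decomposition.
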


\begin{proof}
    Let $\rho_{AB} \in \mathcal{F}_{\mathcal{M}_{AB}}$ by the closure assumption
\begin{align}
    \rho_{A} & = {\rm Tr}_{B} (\rho_{AB}) \in \mathcal{F}_{\mathcal{M}_A} \nonumber \\
    \rho_{B} & = {\rm Tr}_{A} (\rho_{AB}) \in \mathcal{F}_{\mathcal{M}_B}. 
\end{align}
Since $\Phi_{\mathcal{M}}$ preserves $\mathcal{F}_{\mathcal{M}_A}$ and $\mathcal{E}$ preserves $\mathcal{F}_{R_B}$
\begin{equation}
\Phi_{\mathcal{M}}(\rho_{A}) \in \mathcal{F}_{\mathcal{M}_A},  \qquad \mathcal{E}(\rho_{B}) \in \mathcal{F}_{\mathrm{I}_B} .
\end{equation}
The tensor product output lives in the space $\mathcal{H}_{\mathcal{M}_A} \otimes \mathcal{H}_{\mathrm{I}_B}$ and belongs to $\mathcal{F}_{\mathrm{I}_{AB}}$, by 
definition of the noninertial free set on the composite system.  
\end{proof}

\section{Validity of Resource Quantifiers under noninertial motion}
\label{resourcequantifier}

The third component of a quantum resource theory is the resource quantifier.  For a given quantum state, a quantifier assigns a measure
to the amount of resource contained in it.  Mathematically, a resource quantifier $Q(\rho)$ assigns a nonnegative real number to a
quantum state $\rho$.  Some of the fundamental properties required of a resource theory are (a) faithfulness, (b)  monotonicity,  
and (c) convexity.  Measures for resources such as entanglement, coherence, and steerability mostly satisfy these axioms.  One exception to this 
is quantum discord as defined in Ref. \cite{ollivier2001quantum,henderson2001classical,radhakrishnan2020multipartite}, which does not satisfy 
convexity.  Any given quantum resource can be measured using different quantifiers, 
where each measure corresponds to a specific mathematical formulation.  Most of these measures fall into any one of the three broad 
classifications, namely:  convex monotones, robustness-based resource quantifiers, and distance-based resource quantifiers. 
Below we discuss the effects of noninertial motion on these three different classes of resource quantifiers.

\subsection{Convex monotones}
\begin{theorem}[Convexity under noninertial transformations]
\label{theorem7} 
Consider a measure $Q(\rho)$ which is convex and monotonic under free operations $ \Phi (\rho) $.  Let the  resource theory be NRNG such that 
map $ \mathcal{E} (\rho) $ is a free operation.   
Then for a generic decomposition $\rho = \sum_{i} p_{i} \rho_{i}$  
\begin{equation}
    Q \left(\mathcal{E} \circ \Phi(\rho) \right) \leq \sum_{i} p_{i} Q (\rho_{i}).
    \label{thm7}
\end{equation}
\end{theorem}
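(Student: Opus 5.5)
The plan is a two-step chain: first use monotonicity to strip off the operations $\mathcal{E}$ and $\Phi$, then use convexity to split the decomposition of $\rho$.

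For the monotonicity step, recall that $\Phi$ is a free operation on the inertial space, so $\Phi(\mathcal{F}_{\mathcal{M}}) \subseteq \mathcal{F}_{\mathcal{M}}$. Because the resource theory is NRNG, Definition \ref{theorem3} gives $\mathcal{E}(\mathcal{F}_{\mathcal{M}}) \subseteq \mathcal{F}_{\mathrm I}$, and Theorem \ref{theorem2} then tells us that $\mathcal{E}$ is a free operation. I would invoke Theorem \ref{theorem4}(b) to conclude that the composite $\mathcal{E}\circ\Phi$ maps $\mathcal{F}_{\mathcal{M}}$ into $\mathcal{F}_{\mathrm I}$. Hence $\mathcal{E}\circ\Phi$ is itself a free operation from the inertial to the noninertial description. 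Monotonicity of $Q$ under free operations then yields
\begin{equation}
Q(\mathcal{E}\circ\Phi(\rho)) \le Q(\Phi(\rho)) \le Q(\rho).
\end{equation}
Alternatively, one can apply monotonicity to each map in turn.

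For the convexity step, take any decomposition $\rho=\sum_i p_i\rho_i$. Convexity of $Q$ on $\mathcal{D}(\mathcal{H}_{\mathcal{M}})$ gives $Q(\rho)\le \sum_i p_i Q(\rho_i)$. Chaining this with the monotonicity bound gives (\ref{thm7}).

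There is also a second route that yields a sharper intermediate bound. Using linearity of the CPTP maps (Sec. \ref{sec:prop1}), we can write
\begin{equation}
\mathcal{E}\circ\Phi(\rho)=\sum_i p_i\,\mathcal{E}\circ\Phi(\rho_i).
\end{equation}
Convexity of $Q$ applied on the output space gives
\begin{equation}
Q(\mathcal{E}\circ\Phi(\rho))\le \sum_i p_i\,Q(\mathcal{E}\circ\Phi(\rho_i)),
\end{equation}
and monotonicity termwise then gives $Q(\mathcal{E}\circ\Phi(\rho_i))\le Q(\rho_i)$. I would present this second route because it shows the stronger statement: the noninertial output of each component carries no more resource than the component did.

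The main obstacle I expect is a subtlety in monotonicity. The map $\mathcal{E}$ changes the Hilbert space from $\mathcal{H}_{\mathcal{M}}$ to $\mathcal{H}_{\mathrm I}$, so "monotonic under free operations" has to be read for maps between different spaces. That is, $Q$ has to be defined consistently on both $\mathcal{D}(\mathcal{H}_{\mathcal{M}})$ and $\mathcal{D}(\mathcal{H}_{\mathrm I})$, and monotonicity has to hold for resource-nongenerating maps between them. I would state this explicitly as the interpretation of the hypothesis and justify it using the NRNG condition. If needed, I would also use the Stinespring form from Theorem \ref{theorem2}: there $\mathcal{E}$ is an isometric embedding followed by a partial trace over region II, and the partial trace is the only step that changes the resource content.
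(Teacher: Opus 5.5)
Your second route (linearity to write $\mathcal{E}\circ\Phi(\rho)=\sum_i p_i\,\mathcal{E}\circ\Phi(\rho_i)$, convexity of $Q$ on the output, then termwise monotonicity $Q(\mathcal{E}\circ\Phi(\rho_i))\le Q(\rho_i)$) is exactly the paper's proof, and your first route (monotonicity first, then convexity on the input side) is an equally valid reordering; your caveat that $Q$ must be monotone under free maps from $\mathcal{D}(\mathcal{H}_{\mathcal{M}})$ to $\mathcal{D}(\mathcal{H}_{\mathrm I})$ makes explicit something the paper only assumes. One small correction: the intermediate quantities $Q(\Phi(\rho))$ and $\sum_i p_i\,Q(\mathcal{E}\circ\Phi(\rho_i))$ are incomparable in general (for a free $\rho$ decomposed into resourceful $\rho_i$ the first vanishes while the second need not, and for the trivial decomposition the second is never larger), so the second route is not ``sharper,'' merely the one the paper uses.
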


\begin{proof}
A convex monotone is a resource quantifier which satisfies the convexity axiom
\begin{equation}
\forall \rho_{i}:  Q\left( \sum_{i} p_{i} \rho_{i} \right) \leq \sum_{i} p_{i} Q(\rho_{i}),   \qquad \forall p_{i} \in [0,1].    
\end{equation}
Convexity implies that classical randomness does not increase the amount of quantum resource in a system.  In the noninertial setting, 
convexity also ensures that the resource monotone remains consistent when we embed $\mathcal{H}_{\mathcal{M}}$ into the 
enlarged space $\mathcal{H}_{R}$. 
\begin{equation}
    Q \left(\mathcal{E} \circ \Phi (\rho) \right) = Q \left(\mathcal{E} \circ \Phi \bigg(\sum_{i} p_{i} \rho_{i} \bigg) \right). 
\end{equation}
Since $\mathcal{E}$ and $\Phi$ are CPTP operations, which are linear maps on the space of density matrices, 
\begin{equation}
  \mathcal{E} \circ \Phi \left( \sum_{i} p_{i} \rho_{i} \right)  = \sum_{i} p_{i} \mathcal{E} \circ \Phi(\rho_{i}). 
\end{equation}
On applying the convexity axiom
\begin{equation}
    Q\left(\sum_{i} p_{i} \; \mathcal{E} \circ \Phi(\rho_{i}) \right)  \leq 
    \sum_{i} p_{i} \, Q \left( \mathcal{E} \circ \Phi(\rho_{i}) \right)
\end{equation}
The operation $\Phi$ is CPTP and so $Q(\Phi(\rho)) \leq Q(\rho)$.  Similarly the noninertial map $\mathcal{E}$ is resource nongenerating
and so $Q(\mathcal{E}(\rho)) \leq Q(\rho)$.  Hence we have
\begin{equation}
    Q \left(\mathcal{E} \circ \Phi(\rho_{i}) \right) \leq Q(\rho_{i}).
\end{equation}
From this, (\ref{thm7}) follows.  
\end{proof}
Thus if a resource quantifier is a convex monotone in an inertial frame, it retains the same structure in the noninertial frame as well. 
Hence any convex monotone is a valid resource quantifier under the noninertial channel $\mathcal{E}$, provided that the channel does not 
generate any resource.

\subsection{Robustness-based resource quantifier}

Robustness-based measures are another fundamental and widely used type of quantifier in quantum resource theories.  It measures the quantum 
resource of a state by finding the amount of noise to be added to destroy the resource.  Entanglement, coherence and magic are some of 
the popular resources based on this scheme.  The generalized robustness based measure of a quantum resource for a given quantum state 
is defined as 
\begin{equation}
    R_{G}(\rho) = \min \left \{ s \geq 0: \exists \, \tau \;\; {\rm s.t.}  \;\; \frac{\rho + s \tau}{1 + s} \in \mathcal{F} \right \} ,
\end{equation}
where $\mathcal{F}$ is the set of free states. 

\begin{theorem}
 Let the resource theory be NRNG such that 
map $ \mathcal{E} (\rho) $ is a free operation. Then for any free operation $\Phi$, 
\begin{equation}
    R_{G} \left(\mathcal{E} \circ \Phi (\rho) \right) \leq R_{G} (\rho)  .  
    \label{robustness}
\end{equation}
\end{theorem}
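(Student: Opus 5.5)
The plan is to show that generalized robustness is monotone under any single resource nongenerating map, and then apply this twice, once for $\Phi$ and once for $\mathcal{E}$. Fix $\rho$ and let $s^{*}=R_{G}(\rho)$ be attained by some state $\tau$, so that
\begin{equation}
\sigma \equiv \frac{\rho + s^{*}\tau}{1+s^{*}} \in \mathcal{F}_{\mathcal{M}} .
\end{equation}
If the minimum is not attained, we work with an infimizing sequence $s_{n}\to s^{*}$ and take the limit at the end. The only properties needed are linearity, trace preservation and positivity of the channels involved; for $\mathcal{E}$ these were established in Sec.~\ref{sec:prop1}--\ref{sec:prop3}, and for $\Phi$ they are standard.

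For the first step we apply $\Phi$ to both sides of $\rho + s^{*}\tau = (1+s^{*})\sigma$. Linearity gives
\begin{equation}
\frac{\Phi(\rho) + s^{*}\Phi(\tau)}{1+s^{*}} = \Phi(\sigma).
\end{equation}
Since $\Phi$ is CPTP, $\Phi(\tau)$ is again a valid state. Since $\Phi$ is a free operation, $\Phi(\sigma)\in\mathcal{F}_{\mathcal{M}}$. Thus $s^{*}$ together with the witness $\Phi(\tau)$ is feasible for $\Phi(\rho)$, and therefore $R_{G}(\Phi(\rho))\le s^{*}$.

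For the second step we repeat the argument with $\mathcal{E}$, replacing $\rho,\tau,\sigma$ by $\Phi(\rho),\Phi(\tau),\Phi(\sigma)$. The state $\mathcal{E}\circ\Phi(\tau)$ is a valid state in $\mathcal{D}(\mathcal{H}_{\mathrm I})$. By the NRNG hypothesis (Definition~\ref{theorem3}, equivalently Theorem~\ref{theorem2}), $\mathcal{E}\circ\Phi(\sigma)\in\mathcal{F}_{\mathrm I}$. Hence
\begin{equation}
R_{G}\big(\mathcal{E}\circ\Phi(\rho)\big)\le s^{*}=R_{G}(\rho),
\end{equation}
where on the left $R_{G}$ is computed relative to $\mathcal{F}_{\mathrm I}$. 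Composing the two steps gives Eq.~(\ref{robustness}); this is the same structure as Theorem~\ref{theorem4}(b), now with a quantitative witness.

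The main obstacle is bookkeeping rather than substance. $R_{G}$ is defined relative to a free set, and that free set is different on the two sides of the inequality: $\mathcal{F}_{\mathcal{M}}$ for $\rho$ and $\mathcal{F}_{\mathrm I}$ for the output. I would make this explicit, and check that the witness state $\tau$ may be an arbitrary state in the respective space, so that its image remains admissible. A second point is the attainment of the minimum, since $\mathcal{H}_{\mathrm I}$ is infinite dimensional and compactness may fail. To handle this I would phrase the argument with the infimum: every feasible $s$ for $\rho$ is feasible for $\mathcal{E}\circ\Phi(\rho)$, so the feasible set for the output contains the feasible set for the input, and the inequality between infima follows without needing a minimizer.
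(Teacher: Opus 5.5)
Your proof is correct and is essentially the paper's argument: the paper shows that any linear, resource-nongenerating CPTP map $\Lambda$ sends a feasible decomposition $(\rho+s\tau)/(1+s)$ to a feasible one for $\Lambda(\rho)$, and then takes $\Lambda=\mathcal{E}\circ\Phi$, which is your two-step argument done in a single step. Your explicit tracking of $\mathcal{F}_{\mathcal{M}}$ versus $\mathcal{F}_{\mathrm I}$, and your use of feasible-set inclusion for the infimum instead of assuming the minimum is attained, are refinements that the paper's proof glosses over.
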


\begin{proof}
     If a CPTP map $\Lambda$ does not generate a resource then
\begin{equation}
    \Lambda \left( \frac{\rho + s t} {1 + s}  \right) = \frac{\Lambda(\rho) + s \Lambda(t)}{1 + s} \in \mathcal{F}  .
\end{equation}
It then follows that 
\begin{equation}
    R_{G} \left( \Lambda (\rho) \right) \leq R_{G} \left( \rho \right), 
\end{equation}
from which (\ref{robustness}) follows. 
\end{proof}

This means that if the noninertial transformation $\mathcal{E}$ is a resource nongenerating map, then robustness can be used as a resource quantifier in both inertial and noninertial frames.

\subsection{Distance-based resource quantifiers}

A third class of resource quantifier are distance-based measures.  These measures are mathematical functions which estimate the 
distance between a given quantum state and the closest resource free state in the space of density matrices $\mathcal{D}(\mathcal{H})$. 

\begin{theorem}
 Let the resource theory be NRNG such that 
map $ \mathcal{E} (\rho) $ is a free operation. Then for a  free CPTP operation $\Phi$, 
\begin{equation}
   D\left( \mathcal{E} \circ \Phi(\rho), \mathcal{E} \circ \Phi(\sigma) \right) \leq D(\rho, \sigma). 
   \label{distanceunderephi}
\end{equation}
\end{theorem}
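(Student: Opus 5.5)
The inequality (\ref{distanceunderephi}) is a statement about how the distance $D$ behaves under CPTP maps, not about free sets. The plan is therefore to assume, as for any admissible distance-based quantifier, that $D$ is \emph{contractive} (satisfies the data-processing inequality): for every CPTP map $\Lambda$ and all states, $D(\Lambda(\rho),\Lambda(\sigma)) \leq D(\rho,\sigma)$. This holds for the trace distance, the Bures/fidelity-based distances and the quantum relative entropy, which are the distances normally used. With this assumption, the proof is a two-step application of contractivity along the composition $\mathcal{E}\circ\Phi$.

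\textbf{Step 1.} I would check that both maps are CPTP. For $\Phi$ this is given, since it is a free operation. For $\mathcal{E}$ it is Theorem \ref{theorem1}, together with the verification of linearity, trace preservation and (complete) positivity in the subsections following it. Equivalently, $\mathcal{E}(\rho)=\mathrm{Tr}_{\mathrm{II}}[\mathcal{V}\rho\mathcal{V}^\dagger]$ is an isometric embedding followed by a partial trace, and both are CPTP.

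\textbf{Step 2.} Apply contractivity twice:
\begin{equation}
D\left(\mathcal{E}(\Phi(\rho)),\mathcal{E}(\Phi(\sigma))\right)\leq D\left(\Phi(\rho),\Phi(\sigma)\right)\leq D(\rho,\sigma).
\end{equation}
The NRNG hypothesis is not used in this inequality. Its role is in the corollary: taking $\sigma\in\mathcal{F}_{\mathcal{M}}$, we get $\mathcal{E}\circ\Phi(\sigma)\in\mathcal{F}_{\mathrm{I}}$ by Theorem \ref{theorem4}(b) (with $\mathcal{F}_R$ read as $\mathcal{F}_{\mathrm{I}}$). Minimizing over $\sigma$ then gives
\[
\min_{\sigma'\in\mathcal{F}_{\mathrm{I}}}D\left(\mathcal{E}\circ\Phi(\rho),\sigma'\right)\leq\min_{\sigma\in\mathcal{F}_{\mathcal{M}}}D(\rho,\sigma),
\]
which is the monotonicity of the distance-based quantifier across frames.

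\textbf{Main obstacle.} The delicate point is that $\mathcal{E}$ maps between different spaces, $\mathcal{D}(\mathcal{H}_{\mathcal{M}})\to\mathcal{D}(\mathcal{H}_{\mathrm{I}})$, where $\mathcal{H}_{\mathrm{I}}$ is an infinite-dimensional Fock space. So $D$ must be defined on both spaces and satisfy data processing for channels between them. For the trace norm I would prove this directly via the Helstrom variational formula $\tfrac12\|\rho-\sigma\|_1=\max_{0\le P\le I}\mathrm{Tr}[P(\rho-\sigma)]$. The dual map $\mathcal{E}^\dagger(P)=\sum_k A_k^\dagger P A_k$ sends effects to effects because $\sum_k A_k^\dagger A_k=I$, which is exactly the trace-preservation computation already carried out. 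This gives
\[
\mathrm{Tr}[P(\mathcal{E}(\rho)-\mathcal{E}(\sigma))]=\mathrm{Tr}[\mathcal{E}^\dagger(P)(\rho-\sigma)]\leq\tfrac12\|\rho-\sigma\|_1 .
\]
For other distances I would invoke the known monotonicity theorems (Uhlmann for fidelity, Lindblad for relative entropy), which hold for trace-class operators in infinite dimensions. Convergence of the Kraus sum is handled by the same binomial series used in the trace-preservation proof.
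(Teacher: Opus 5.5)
Your proof is correct and follows the same route as the paper, which simply postulates that an admissible distance is contractive and applies this to the composition $\mathcal{E}\circ\Phi$. The difference is that the paper assumes contractivity only under free operations, so it needs NRNG to make $\mathcal{E}$ eligible. You instead assume full data processing under all CPTP maps, which is the condition the paper itself lists after the proof. That makes NRNG unnecessary for the inequality and lets you use it for the corollary on $\min_{\sigma'\in\mathcal{F}_{\mathrm{I}}}D$; your explicit Helstrom/dual-map check of the cross-space step for the trace distance is a useful addition the paper omits.
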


\begin{proof}
To be considered as a resource quantifier, a distance measure should be contractive.  Hence a given distance measure $D(\rho, \sigma)$ must satisfy
\begin{equation}
    D\left(\Lambda(\rho), \Lambda(\sigma) \right) \leq D(\rho, \sigma)
\end{equation}
for all free operations $\Lambda$. From this Eq. (\ref{distanceunderephi}) follows.  
\end{proof}

Thus any distance measure which can be used in an inertial frame can also be used in  noninertial frame if  
{\it(i)} The noninertial map $\mathcal{E}$ is resource nongenerating; and 
{\it(ii)} The distance $D(\rho, \sigma)$ is contractive under all CPTP maps.  

Measures such as relative entropy, trace distance, Bures distance  are contractive under CPTP maps.  Hence these distances can be used 
to estimate quantum resources both in inertial and noninertial setting.  Contrarily, the Hilbert-Schmidt distance is not contractive under 
all CPTP maps \cite{ozawa2000entanglement} and therefore cannot be used as a valid resource quantifier in general setting.

\section{Summary and Conclusions}
\label{conclusions}

We have examined the effect of noninertial motion on quantum state from the point of view of a quantum channel, and examined its consequences on quantum resource theories.  In the first part of this paper, we derived the Kraus operators corresponding to noninertial motion, and showed that it is a CPTP map.  
Our formulation is generic and can be used to describe $N$-partite multilevel systems. It is equivalent to a bosonic amplifier channel, which takes into account of the virtual thermal bath generated by the Unruh effect. We note similar investigations connecting noninertial dynamics to open quantum systems were done in Refs. \cite{ahn2018unruh,aspachs2010optimal}. However, previous formulations are not trace preserving for particular states (see Appendix \ref{app:ahn}) and hence problematic.  For the 
transformation used in Ref. \cite{aspachs2010optimal}, the necessary properties of positivity and trace preserving were not verified.    

In the second part of this paper, we use the results establishing the noninertial motion as a CPTP map to examine its effects on quantum resource theories.  This is done using Stinespring's dilation theorem, which establishes a link between the inertial quantum resources and the subsequent evolution introduced by noninertial motion.  We show 
that the loss of information to the inaccessible Rindler region is formally equivalent to the interaction between the system and an 
external bath.  We then consider the effects of the noninertial map on the three key elements namely (i) free states, 
(ii) free operations and (iii) resource quantifiers. We defined the notion of a NRNG resource theory to 
investigate its implication on free states, free operations and resource quantifiers. From the analysis we observed that the basic geometry of the free states
do not change under noninertial motion.  Then we discussed the changes brought about in free operations when it is combined with a 
noninertial CPTP map.  Finally, we checked the validity of resource quantifiers under noninertial transformations.

We anticipate that the quantum channel description of noninertial motion will be helpful to understand the effects of accelerated motion on quantum communication 
and computing, through its ease of use and familiarity in the wider quantum information community.   In this work, we have only considered flat space-time and the results derived hold only under these conditions.  A more detailed attempt to develop a quantum channel framework in curved space-time is an interesting future direction that will extend the scope of these results.

\section*{Acknowledgements}
T.B. is supported by the SMEC Scientific Research Innovation Project (2023ZKZD55); the National Natural Science Foundation of China (92576102); the Science and Technology Commission of Shanghai Municipality (22ZR1444600); the NYU Shanghai Boost Fund; the China Foreign Experts Program (G2021013002L); the NYU-ECNU Institute of Physics at NYU Shanghai; the NYU Shanghai Major-Grants Seed Fund; and Tamkeen under the NYU Abu Dhabi Research Institute grant CG008.

\bibliography{reference}

\appendix

\section{Violation of trace preserving property of the map given in Ref. \cite{ahn2018unruh}}
\label{app:ahn}

In this section we show that the map derived in Ref. \cite{ahn2018unruh} is not a universal CPTP map.  
To prove this let us consider the Bell state $| \Phi^{+} \rangle$ in the Minkowski space-time
\begin{equation}
    | \Phi^{+} \rangle = \frac{1}{\sqrt{2}} \left( |0_{A} \rangle_{M} \otimes |0_{B} \rangle_{M} 
+ |1_{A} \rangle_{M} \otimes |1_{B} \rangle_{M} \right)
\label{Bellstatephiplus}                                                    
\end{equation}
Assuming relativistic motion as noisy dynamics we have
\begin{equation}
    \rho^{Q} \rightarrow \rho^{Q^{\prime}} = \mathcal{E}^{Q}(\rho^{Q}) = \sum_{n} A_{n}^{Q} \rho^{Q} {A_{n}^{Q} }^\dagger
\label{dmtransformation}    
\end{equation}
Here we use the $A_{n}^{Q}$ operator which is Eq. (11) in Ref. \cite{ahn2018unruh} and is also given below:
\begin{equation}
    A_{n}^{Q} = \frac{1}{\sqrt{n!}} \frac{\tanh^{n} r}{\cosh^{2} r}     (\cosh r)^{\hat{n}_{A}} \otimes ( b^{\dagger}_{I} )^{n}
\end{equation}
where $\hat{n}_{A} = a_{A}^{\dagger} a_{A}$ is the number operator acting on the Hilbert space corresponding to Alice \cite{ahn2018unruh}.  
Applying this to the density matrix corresponding to the $|\Phi^{+} \rangle$ state we obtain
\begin{eqnarray}
    \rho^{\prime} &=& \sum_{n =0}^{\infty} A_{n}^{Q} \rho^{Q}  {A_{n}^{Q} }^\dagger  \nonumber \\
                  &=& \frac{1}{2} \sum_{n=0}^{\infty} A_{n}^{Q} \bigg( |00 \rangle \langle 00 | + |00 \rangle \langle 11| 
                                   + | 11 \rangle \langle 00 | \nonumber \\
                  & &  \phantom{\frac{1}{2} \sum_{n=0}^{\infty} A_{n}^{Q} } + |11 \rangle \langle 11| \bigg)  {A_{n}^{Q} }^\dagger  .
\end{eqnarray}
The action of the operator $A_{n}^{Q}$ on the $|00 \rangle$ and $|11 \rangle$ gives
\begin{eqnarray}
 A_{n}^{Q} |00 \rangle  &=& \frac{\tanh^{n}r}{\cosh^{2}r}  |0 \rangle |n\rangle_{\text{I}}  |n\rangle_{\text{II}}    \\
 A_{n}^{Q} |11 \rangle  &=& \frac{\tanh^{n}r}{\cosh r} \sqrt{n+1} |1 \rangle |n+1 \rangle_{\text{I}} |n\rangle_{\text{II}} 
\end{eqnarray}
Implementing the action of the operator $A_{n}^{Q}$ and tracing out the modes in the Rindler II region we get
\begin{eqnarray}
   \rho^{\prime} &=&  \frac{1}{2} \Bigg[ \sum_{n=0}^{\infty} \frac{\tanh^{2n}r}{\cosh^{4}r} |0 \rangle |n\rangle_{\text{I}} \langle 0| \langle n|_{\text{I}} 
                      \nonumber \\
                 & &  + \frac{\tanh^{2n}r}{\cosh^{3}r} \sqrt{n+1} |0 \rangle |n\rangle_{\text{I}}  \langle 1| \langle n+1|_{\text{I}}    \nonumber \\
                 & &  + \frac{\tanh^{2n}r}{\cosh^{3}r} \sqrt{n+1} |1 \rangle |n+1\rangle_{\text{I}} \langle 0| \langle n|_{\text{I}}    \nonumber \\
                 & &  + \frac{\tanh^{2n}r}{\cosh^{2}r} (n+1) |1 \rangle |n+1\rangle_{\text{I}} \langle 1| \langle n+1|_{\text{I}} 
\label{phiplustransformed}                 
\end{eqnarray}
To verify the trace preserving property we need to compare the trace of the initial state and the final state.  It is well known that the 
initial state $|\Phi^{+}\rangle$ has a trace of $1$. Now if we compute the trace of the state in Eq. (\ref{phiplustransformed}) we get
\begin{equation}
    {\rm Tr} (\rho^{\prime}) =  \frac{1}{2} \left[ \sum_{n=0}^{\infty} \frac{\tanh^{2n}r}{\cosh^{4}r} 
                                + \sum_{n=0}^{\infty} \frac{\tanh^{2n}r}{\cosh^{2}r} (n+1) \right]
    \label{traceofrhoprime}
\end{equation}
From Ref. \cite{ahn2018unruh} we know the following summations
\begin{eqnarray}
    \sum_{n=0}^{\infty}  \tanh^{2n}r &=&  \cosh^{2}r \\
    \sum_{n=0}^{\infty} (n+1) \tanh^{2n}r  &=& \cosh^{4}r 
    \label{summations} 
\end{eqnarray}
Substituting the summations in Eq. (\ref{summations}) in Eq. (\ref{traceofrhoprime}) we get 
\begin{equation}
 {\rm Tr} (\rho^{\prime}) =  \frac{1}{2} \left[ \frac{1}{\cosh^{2}r} + \cosh^{2} r \right] \neq 1 .
\end{equation}
Thus we prove that the operator $A_{n}^{Q}$ is not trace preserving with respect to the Bell state $|\Phi^{+} \rangle$. 
Hence the operator given in Ref. \cite{ahn2018unruh} is not universal in the sense that it does not work for all states.

\end{document}